\def\eqref#1{equation~\ref{#1}}
\def\1{\bm{1}}
\DeclareMathAlphabet{\mathsfit}{\encodingdefault}{\sfdefault}{m}{sl}
\SetMathAlphabet{\mathsfit}{bold}{\encodingdefault}{\sfdefault}{bx}{n}
\newtheorem{theorem}{Theorem}
\newtheorem{prop}{Proposition}
\newtheorem{defn}{Definition}
\newtheorem{example}{Example}
\newtheorem{remark}{Remark}
\newcommand{\ci}{\mbox{\protect $\: \perp \hspace{-2.3ex}
\perp$ }}
\newcommand{\notci}{\nolinebreak{\not\hspace{-1.5mm}\ci}}
\newcommand{\cd}{\,|\,}
    \renewcommand{\bibsection}{\subsubsection*{References}}
\title{Localised Natural Causal Learning Algorithms for Weak Consistency Conditions}
\author[1]{\href{mailto:<kai.teh.21@ucl.ac.uk>?Subject=Relaxation and construction of localised natural causal learning algorithms}{Kai~Z.~Teh}{}}
\author[1]{Kayvan Sadeghi}
\author[1]{Terry Soo}
\affil[1]{%
    Department of Statistical Science\\
    UCL, London, UK
}
\begin{document}
  \begin{bibunit}[plainnat]
\maketitle

\begin{abstract}
By relaxing conditions for ``natural'' structure learning algorithms, a family of constraint-based algorithms containing all exact structure learning algorithms under the faithfulness assumption, we define localised natural structure learning algorithms (LoNS). We also provide a set of necessary and sufficient assumptions for consistency of LoNS, which can be thought of as a strict relaxation of the restricted faithfulness assumption. We provide a practical LoNS algorithm that runs in exponential time, which is then compared with related existing structure learning algorithms, namely PC/SGS and the relatively recent Sparsest Permutation algorithm. Simulation studies are also provided.
\end{abstract}

\section{Introduction}\label{sec:intro}
Inferring causal relationships has always been of great interest in different fields, with some frameworks like potential outcomes and graphs gaining prominence amongst the causality community. A main goal of graph-based causal inference is causal discovery; given data, we would like to uncover the underlying causal structure in the form of a true causal graph, on which conventional graph-based causal inference techniques hinge. We will mostly be concerned with the setting of observational data only, such as when interventional data in the form of randomised control trials are unavailable, from which the true causal graph is recoverable up to its graphical separations. Current causal discovery approaches can generally be categorised into score-based approaches \citep{score} and constraint-based approaches \citep{const}. Here, we will mostly be concerned with the latter. 

Assumptions are needed for constraint-based approaches, otherwise many causal structures representing the same data may be obtained, resulting in vacuous causal statements. Amongst these, the most common and widely known is the faithfulness assumption, where every conditional independence in the data generating distribution is exactly represented by the true causal graph \citep{Faith}; most constraint-based learning approaches such as PC and SGS provably return the true causal graph up to its graphical separations. However in practice and theory, the condition can be too strong at times \citep{faithbad}. 

Efforts to relax the faithfulness assumption include the Sparsest Permutation (SP) algorithm by \citet{UhlSP}, which provably returns the graphical separations of the true causal graph under strictly weaker assumption than faithfulness, at the expense of factorial run time by permuting the causal variables. Greedy approaches to speed up the SP algorithm  \citep{UhlGSP, lam} have been proposed, however these algorithms only return the true causal graph under strictly \emph{stronger} conditions than SP.

Addressing this, \cite{Sad} proposed the class of ``natural'' structure learning algorithms, which under the faithfulness assumption, encompasses constraint-based approaches such as SGS/PC algorithms. In addition, natural structure learning algorithms are also proven to return the true causal graph up to graphical separations under well defined assumptions that are shown to be strictly weaker than faithfulness. Thus, the objective of this paper is as follows: 1.) To further weaken the consistency conditions by defining a localised version of natural structure learning algorithms. 2.) To provide a practical algorithm of this type that works under these conditions. 

The structure of the paper will be as follows: Section \ref{bg} covers the relevant background, Section \ref{tnm} covers the theory and practical algorithm and Section \ref{sim} compares the algorithm with related existing algorithms, PC and SP. All proofs will be deferred to the supplementary material.

\section{Background}\label{bg}
\subsection{Graphs}
We first introduce the relevant concepts in graphical models, as well as some existing results in literature. In this work, unless noted otherwise, graphs will be implicitly assumed to be a \emph{directed acyclic graph} (DAG) that is a graph over the set of nodes \(V=\{1,...,n\}\), with directed edges such that there does not exist a sequence of directed edges from a node to itself. We denote an\(_G(C)\) as the set of nodes \(i\not \in C\) such that there exists a sequence of directed edges from \(i\) to some \(j\in C\) in \(G\).

We denote \(A\perp_G B\cd C\) as graphical separation in graph \(G\) 
between 
\(A,B\) given \(C\), where \(A,B,C \subseteq V\) are disjoint; in the case of DAGs, this can be understood as d-separation.  
A set of random variables \(X=(X_1,...,X_n)\) with joint distribution \(P\) is associated to the set of nodes \(V\). 
We denote \(A\ci B\cd C\) as conditional independence of \( (X_i)_{i\in A} \) and \( (X_j)_ {j\in B} \) given \( (X_k)_{k\in C} \). We relate the two notions together using \emph{Markov property}:
\begin{defn}[Markov property]
    A distribution \(P\) is \emph{Markovian} to \(G\) if   \(A\perp_G B\cd C \Rightarrow A\ci B\cd C\) for all disjoint \(A,B,C \subseteq V\).
\end{defn}
If we have the reverse implication as well, then we have faithfulness: 
\begin{defn}[Faithfulness]
   A distribution   \(P\) is \emph{faithful} to \(G\) if   \(A\perp_G B\cd C \iff A\ci B\cd C\) for all disjoint \(A,B,C \subseteq V\).
\end{defn}
\citet{orderfaith} has shown that \(P\) being faithful to DAG \(G\) implies that \(P\) satisfies \emph{ordered upward stability} and \emph{ordered downward stability} wrt \(G\), defined in the case of DAGs, as follows:
\begin{defn}\hspace{1pt}
\begin{enumerate}
    \item  \emph{(Ordered upward stability (OUS)).}
    \(P\) satisfies ordered upward stability wrt \(G\) if for all \(i,j,k\) and \(C\subseteq V\backslash\{i,j,k\}\), such that \(k\in \text{\emph{an}}_G(i,j)\), we have \(i\ci j\cd C\Rightarrow i\ci j\cd C\cup \{k\}\).
    \item  \emph{(Ordered downward stability (ODS)).}
    \(P\) satisfies ordered downward stability wrt \(G\) if for all \(i,j,k\) and \(C\subseteq V\backslash\{i,j,k\}\), such that \(k\not\in \text{\emph{an}}_G(i,j,C)\), we have \(i\ci j\cd C\cup \{k\}\Rightarrow i\ci j\cd C\).
\end{enumerate}  
\end{defn}
If \(P\) is faithful to \(G\), then from \(P\) we can recover the true causal graph \(G\) up to its \emph{Markov equivalence class (MEC)}, defined as the set of all graphs that imply the same graphical separations. 

Denote \(\text{sk}(G)\) to be the \emph{skeleton} of graph \(G\), formed by removing all arrowheads from edges in \(G\). A \emph{v-configuration} is a  set of nodes \(i,k,j\) such that \(i\) and \(j\) are connected to \(k\), but \(i\) and \(j\) are not connected, and will be represented as \(i\sim k\sim j\). A v-configuration oriented as \(i\xrightarrow{}k\xleftarrow{}j\) is a \emph{collider}, otherwise the v-configuration is a \emph{non-collider}.

\begin{remark}
    Some authors allow nodes \(i\) and \(j\) of collider \(i\xrightarrow{}k\xleftarrow{}j\) to be adjacent, but we do not. If \(i\) is not adjacent to \(j\), then \(i\xrightarrow{}k\xleftarrow{}j\) is sometimes called an unshielded collider, but will simply be referred to as a collider here.
\end{remark}

To relate \(\text{sk}(G)\) with distribution \(P\), we define \(\text{sk}(P)\) 
as follows:
\begin{defn}[\(\text{sk}(P)\)]
    Given a distribution \(P\), the skeleton  \(\text{sk}(P)\) is the undirected graph with node set \(V\), such that for all \(i,j\in V\), 
    the node
    \(i\) is adjacent to \(j\) if and only if there does not exist any \(C\subset V\backslash\{i,j\}\) such that \(i\ci j\cd C\).
\end{defn}
Note that \(\text{sk}(P)\) 
is the output of the skeleton building step of SGS/PC algorithm under faithfulness. 
\begin{defn}[Adjacency faithfulness]
    A distribution \(P\) is adjacency faithful wrt graph \(G\), if for all \(i,j\in V\), we have: 
    \(i\) adjacent to \(j\) in \(G \Rightarrow i\notci j\cd C\) for all \(C\subseteq V\backslash \{i,j\}\).
\end{defn}
Note that if  \(P\) is 
adjacency faithful 
wrt \(G\), then  \(\text{sk}(P)=\text{sk}(G)\).

\subsection{Natural Structure Learning Algorithms}
Let \(P\) be Markovian to the true causal graph \(G_0\), then  a causal learning algorithm aims to recover the graph \(G_0\), up to the MEC, in which case we say that the algorithm is \emph{consistent}. To relax the faithfulness assumption, \citet{Sad} introduced \emph{natural  structure learning algorithms}.
\begin{defn}[Natural structure learning algorithm]
    An algorithm that takes distribution \(P\) as input and outputs DAG \(G(P)\) is natural if:
    \begin{enumerate}
        \item \(\text{\emph{sk}}(G(P))=\text{\emph{sk}}(P)\).
        \item \(P\) satisfies OUS and ODS wrt \(G(P)\).
    \end{enumerate}
\end{defn}

The following conditions on \(P\) and the true causal DAG \(G_0\) ensure the consistency of natural structure learning algorithms: 
\begin{defn}[V-stability]
    \(P\) is V-stable if
    for all v-configurations \(i\sim k\sim j\) in \(\text{\emph{sk}}(P)\), and \(C\subseteq V\backslash \{i,j,k\}\), the 
    independencies 
    \(i\ci j\cd C\) and \(i\ci j\cd C\cup \{k\}\) cannot both hold.
\end{defn}
\begin{remark}
    This is a definition on \(P\) itself, and is implied by the well-known singleton transitive axiom, under adjacency faithfulness.
\end{remark}
\begin{prop}[Theorems 14 and 25 of \cite{Sad}]
\label{prop1}
The graphs    \(G(P)\) and \(G_0\) are Markov equivalent if the following holds:
    \begin{enumerate}  
        \item \(P\) satisfies adjacency faithfulness wrt \(G_0\).
        \item \(P\) satisfies ordered upward and downward stabilities wrt \(G_0\).
        \item \(P\) is V-stable.
    \end{enumerate}
\end{prop}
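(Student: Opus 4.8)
The plan is to establish Markov equivalence through the classical structural characterisation: two DAGs are Markov equivalent if and only if they have the same skeleton and the same set of (unshielded) colliders. For the skeletons, being natural the algorithm gives \(\text{sk}(G(P))=\text{sk}(P)\), while adjacency faithfulness together with the standing Markov assumption gives \(\text{sk}(P)=\text{sk}(G_0)\); hence \(\text{sk}(G(P))=\text{sk}(G_0)\), so the two DAGs carry exactly the same v-configurations. It then remains to show that a v-configuration \(i\sim k\sim j\) is a collider in \(G_0\) if and only if it is a collider in \(G(P)\).

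The key lemma I would isolate concerns \emph{inclusion-minimal separating sets}: if \(P\) satisfies ODS with respect to a DAG \(G\), and \(S\subseteq V\setminus\{i,j\}\) is minimal by inclusion among sets with \(i\ci j\cd S\), then \(S\subseteq\text{an}_G(\{i,j\})\). I would prove this by contradiction --- if some \(v\in S\) is not a \(G\)-ancestor of \(\{i,j\}\), choose such a \(v\) that is last in a topological ordering of \(G\); a short case check (using acyclicity) shows \(v\notin\text{an}_G(\{i,j\}\cup(S\setminus\{v\}))\), so ODS removes \(v\) from \(S\) and contradicts minimality. Crucially, this lemma uses only ODS and no Markov-type property, so it is available for both \(G_0\) and \(G(P)\).

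Using it, I would characterise the colliders of \(G_0\) intrinsically in terms of \(P\): the v-configuration \(i\sim k\sim j\) is a collider in \(G_0\) if and only if there is a set \(S\subseteq V\setminus\{i,j\}\) with \(k\notin S\) and \(i\ci j\cd S\). For the forward direction, a collider \(i\to k\leftarrow j\) forces \(k\notin\text{an}_{G_0}(\{i,j\})\) by acyclicity, and applying the minimal-separator lemma to \(G_0\) to any inclusion-minimal separator of \(\{i,j\}\) (one exists since \(i,j\) are nonadjacent in \(\text{sk}(P)\)) places that separator inside \(\text{an}_{G_0}(\{i,j\})\), so it avoids \(k\). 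For the converse, if \(i\sim k\sim j\) is not a collider in \(G_0\) then some edge points out of \(k\), hence \(k\in\text{an}_{G_0}(\{i,j\})\); then any separator \(S\) avoiding \(k\) would, by OUS with respect to \(G_0\), also satisfy \(i\ci j\cd S\cup\{k\}\), and V-stability forbids \(i\ci j\cd S\) and \(i\ci j\cd S\cup\{k\}\) from both holding.

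The last step transfers this to \(G(P)\), using that \(P\) satisfies OUS and ODS with respect to \(G(P)\) as well. If \(i\sim k\sim j\) is a collider in \(G_0\), the characterisation supplies a separator avoiding \(k\); were \(i\sim k\sim j\) a non-collider in \(G(P)\) we would have \(k\in\text{an}_{G(P)}(\{i,j\})\), and OUS with respect to \(G(P)\) plus V-stability would give the same contradiction, so it is a collider in \(G(P)\). Conversely, if \(i\sim k\sim j\) is a collider in \(G(P)\), take an inclusion-minimal separator \(S\) of \(\{i,j\}\); the lemma applied to \(G(P)\) gives \(S\subseteq\text{an}_{G(P)}(\{i,j\})\), and acyclicity gives \(k\notin\text{an}_{G(P)}(\{i,j\})\), so \(k\notin S\), and the characterisation then forces \(i\sim k\sim j\) to be a collider in \(G_0\). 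Thus the two DAGs share skeleton and colliders, hence are Markov equivalent. The point needing the most care is the asymmetry of the hypotheses: \(P\) is Markovian to \(G_0\) but not, a priori, to \(G(P)\), so inside \(G(P)\) there is no canonical ancestral separating set to exhibit directly --- the minimal-separator lemma is exactly what lets ODS alone localise a separator relative to the ancestral structure of either graph, and verifying its case analysis (and checking that every invocation of V-stability is against a genuine v-configuration of \(\text{sk}(P)\)) is where the real work sits.
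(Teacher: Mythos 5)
Your argument is correct, but note that the paper itself offers no proof of Proposition \ref{prop1}: it is imported wholesale from Theorems 14 and 25 of \cite{Sad}, so there is no in-paper proof to compare against. What you have written is a valid self-contained derivation, and it follows essentially the route one finds in the cited work and the ordered-stability literature. The skeleton identity $\text{sk}(G(P))=\text{sk}(P)=\text{sk}(G_0)$ is immediate from naturality, adjacency faithfulness and the pairwise Markov property (Proposition \ref{simp}), reducing everything to matching colliders via Proposition \ref{Markovchar}. Your minimal-separator lemma is sound: taking $v\in S\setminus\text{an}_G(\{i,j\})$ last in a topological order, $v$ cannot reach another element of $S$ lying in $\text{an}_G(\{i,j\})$ without reaching $\{i,j\}$, nor a later element of $S$ outside it, so $v\notin\text{an}_G(i,j,S\setminus\{v\})$ and ODS strips $v$, contradicting minimality. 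You correctly observe that this lemma needs only ODS, which is exactly what makes it usable for $G(P)$, to which $P$ is not assumed Markovian --- this is the genuine crux, and you handle it. Your collider characterisation (``collider in $G_0$ iff some separator of $i,j$ avoids $k$'') and its transfer to $G(P)$ via OUS plus V-stability are also correct; every invocation of V-stability is against a v-configuration of $\text{sk}(P)$ because all three skeletons agree. One cosmetic simplification: for the forward direction of the $G_0$-characterisation you could bypass the lemma entirely, since Markovianity to $G_0$ already yields the explicit separator $\text{an}_{G_0}(i,j)$, which avoids $k$ by acyclicity; the lemma is only indispensable on the $G(P)$ side.
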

\begin{remark}
    In \cite{Sad}, Condition 1 above is given in terms of converse pairwise Markovian instead of adjacency faithfulness, this is due to attempts in characterising the consistency conditions in terms of structural equation models (SEM). However, only the weaker adjacency faithfulness is needed and here we are focused on relaxing conditions.
\end{remark}
By Example 21 in \citet{Sad}, it can be seen that combined, these conditions are strictly weaker than \emph{restricted faithfulness}, which is the weakest known consistency condition for SGS/PC \citep{UhlSP}.

Under the faithfulness assumption, constraint-based structure learning algorithms are natural. However, it is unclear whether these algorithms are still natural structure learning algorithms once the faithfulness assumption is relax, and no concrete algorithm is provided in \cite{Sad}. Thus, without assuming faithfulness, we aim to provide a general concrete natural structure learning algorithm that relaxes the consistency conditions in Proposition \ref{prop1}. 

As usual in constraint-based causal learning, we assume the availability of a \emph{conditional independence oracle}---given a probability distribution \(P\), we can determine with certainty whether conditional independence statements are true. In practice, conditional independence statements need to be estimated from the data using methods such as HSIC testing \citep{HSIC}, and is shown to be in general, a hard problem \citep{hard}.

\section{Theory and Methods}\label{tnm}
Here, we present our relaxation of the theory of natural structure learning algorithms and the practical algorithm.
\subsection{Theory}
\begin{defn}[V-OUS and collider-stability]\label{v-ousdef}
    A distribution \(P\) is V-OUS and collider-stable wrt DAG \(G\) if for all v-configuration \(i\sim  k\sim j\) in \(G\):
    \begin{enumerate}
        \item \emph{(V-Ordered upward stability (V-OUS))}. If \(i\sim  k\sim j\) is a non-collider,
        then
        for all \(C\subseteq V\backslash\{i,j,k\}\), 
        we have 
        \(i\ci j\cd C\Rightarrow i\ci j\cd C\cup \{k\}\).
        \item \emph{(Collider-stability)}. If \(i\xrightarrow[]{}k\xleftarrow{}j \), then  
        there exists \(C'\subseteq V\backslash\{i,j,k\}\)  such that \(i\ci j\cd C'\).
    \end{enumerate}
\end{defn}

Collider-stability is related to 
\emph{orientation faithfulness}, which states that the graph is faithful up to v-configurations in the graph \citep{rams}. However 
collider-stability
is much weaker, even than the Markovian assumption.
\begin{prop}[Collider-stable is very weak]\label{weak}
    If \(P\) is Markovian to \(G\), then \(P\) is collider-stable wrt \(G\).
\end{prop}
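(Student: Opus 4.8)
The plan is to write down an explicit witness set $C'$ for each collider and check it works using only acyclicity of $G$ and the Markov property. Fix a v-configuration $i\sim k\sim j$ in $G$ that is a collider, so $i\to k\leftarrow j$ and (by the convention fixed in the Remark) $i$ and $j$ are non-adjacent in $G$. I would take
\[
C' \ :=\ \text{an}_G(\{i,j\}),
\]
the set of proper ancestors of $\{i,j\}$. Two things must be verified: (a) $k\notin C'$, so that $C'$ is an admissible conditioning set for the definition of collider-stability; and (b) $i\perp_G j\cd C'$, so that the Markov property of $P$ with respect to $G$ yields $i\ci j\cd C'$.

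For (a): if $k\in\text{an}_G(\{i,j\})$ there is a directed path from $k$ to $i$ or to $j$; concatenating it with the edge $i\to k$ (respectively $j\to k$) produces a directed cycle, contradicting that $G$ is a DAG. Hence $k\notin C'$, and also $i,j\notin C'$ by definition of $\text{an}_G$, so $C'\subseteq V\backslash\{i,j,k\}$. For (b) I would invoke the standard ancestral-moral-graph characterisation of d-separation: since $\{i,j\}\cup C'=\text{an}_G(\{i,j\})\cup\{i,j\}$ is ancestrally closed, $i\perp_G j\cd C'$ is equivalent to $i$ and $j$ being separated by $C'$ in the moral graph of the induced subgraph $G_{\{i,j\}\cup C'}$. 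Any path between $i$ and $j$ in that undirected graph may be taken to be simple, so all its interior vertices lie in $C'$ and the path is blocked — unless it is the single edge $i\sim j$. But $i\sim j$ is not an edge of $G$ (non-adjacency), nor is it introduced by moralisation: a moralising edge between $i$ and $j$ would require a common child $m$ of $i$ and $j$ lying in $\{i,j\}\cup C'$, and then $i\to m$, $j\to m$ together with $m\in\text{an}_G(\{i,j\})\cup\{i,j\}$ again forces a directed cycle. Thus $C'$ separates $i$ and $j$ in the moral graph, so $i\perp_G j\cd C'$; since $P$ is Markovian to $G$ we conclude $i\ci j\cd C'$, which is exactly what collider-stability demands.

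The one genuinely load-bearing observation — the rest being the textbook fact that two non-adjacent vertices of a DAG are d-separated by the ancestors of the pair — is that the collider apex $k$ is never forced into the separating set, which is immediate from acyclicity because $k$ is a common child of $i$ and $j$. If one prefers to avoid the moralisation criterion, an equivalent route is to note that at least one of $i,j$, say $j$, is not a descendant of the other, and run a topological-order / local-Markov argument restricted to the ancestral set $\text{an}_G(\{i,j\})\cup\{i,j\}$; $k$ is again excluded as a descendant rather than an ancestor. Either way the proof is short, and I do not anticipate a real obstacle.
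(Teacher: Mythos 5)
Your proof is correct and follows essentially the same route as the paper: both take $C'=\mathrm{an}_G(i,j)$, observe that acyclicity excludes the collider apex $k$ from this set, and conclude $i\ci j\cd C'$ from the Markov property. The only difference is that you re-derive the pairwise Markov property (non-adjacent nodes are d-separated by the ancestors of the pair) via moralisation, whereas the paper simply cites it as a known result.
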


V-OUS and collider-stability
can be seen as local versions of ordered stabilities for the purposes of learning DAGs. In the case of DAGs, V-OUS can be seen as a relaxation of OUS since the implication \(i\ci j\cd C\Rightarrow i\ci j\cd C\cup \{k\}\) in Definition \ref{v-ousdef} only has to hold for \(i,j,k\) that are non-colliders in \(G\). Likewise, collider-stable is implied by ODS and can be seen as a relaxation.

\begin{defn}[Localised Natural Structure learning (LoNS) algorithm]\label{lonsdef}An algorithm that takes input distribution \(P\), and outputs \(G(P)\) is localised natural if:
    \begin{enumerate}
        \item \(\text{\emph{sk}}(P)=\text{\emph{sk}}(G(P))\).
        \item \(P\) is V-OUS and collider-stable wrt \(G(P)\).
    \end{enumerate}
\end{defn}
Note that the above is the same with natural structure learning algorithms, just that one of the requirements is relaxed, namely Condition 2 in Definition \ref{lonsdef}. 
Thus, just like natural structure learning algorithms, all constraint-based algorithms that work under faithfulness are localised natural.

To characterise all DAGs that could be the output of a LoNS algorithm, we introduce the following orientation rule:
\begin{defn}[V-OUS and collider-stable orientation rule wrt \(P\)]\label{rule}
    A \emph{V-OUS and collider-stable orientation rule wrt \(P\)} is defined as an assignment of v-configurations \(i\sim k\sim j\) in \(\text{\emph{sk}}(P)\) into colliders and non-colliders as follows:
    \begin{enumerate}
        \item If \(i\ci j\cd C\) and \(i\notci j\cd C\cup\{k\}\) for some \(C\subseteq V\backslash \{i,j,k\}\), then assign \(i\sim k\sim j\) to be a collider.
        \item If for all \(C\) such that \(i\ci j\cd C\), we have \(k\in C\), then assign \(i\sim k\sim j\) to be a non-collider.
    \end{enumerate}
\end{defn}
A DAG \(G\) is said to satisfy the V-OUS and collider-stable orientation rule wrt \(P\), if \(G\) satisfies: 
\begin{enumerate}
    \item\label{i1} \(\text{sk}(P)=\text{sk}(G)\).
    \item\label{i2} For all v-configurations \(i\sim k\sim j\) in \(G\), via the orientation rule in Definition \ref{rule},
\begin{itemize}
    \item[]if \(i\sim k\sim j\) is assigned to be a collider or non-collider, then \(i\sim k\sim j\) is a collider or non-collider, respectively in \(G\).
\end{itemize}
\end{enumerate}
We have the following characterisation:
\begin{prop}[Characterisation]\label{mnsrep}
   The DAG \(G\) satisfies the V-OUS and collider-stable orientation rule wrt \(P\), if and only if \(P\) satisfies:
   \begin{enumerate}
        \item \(\text{\emph{sk}}(P)=\text{\emph{sk}}(G)\).
        \item \(P\) is V-OUS and collider-stable wrt \(G\).
    \end{enumerate}
\end{prop}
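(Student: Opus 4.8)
The plan is to prove the equivalence by directly unpacking both sides, one v-configuration at a time. Both sides of the claimed equivalence explicitly demand \(\text{sk}(P)=\text{sk}(G)\), so I would fix this common condition first; it makes the v-configurations of \(G\), of \(\text{sk}(G)\) and of \(\text{sk}(P)\) coincide, so that the quantifier ``for all v-configurations \(i\sim k\sim j\)'' ranges over the same set in Definition \ref{v-ousdef}, in Definition \ref{rule}, and in item~\ref{i2}. It then suffices to show, for each fixed v-configuration \(i\sim k\sim j\) of \(\text{sk}(P)\), that \(G\) respects the orientation-rule clauses of Definition \ref{rule} at that triple if and only if \(P\) is V-OUS and collider-stable at that triple wrt \(G\). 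Two preliminary observations help: (a) since \(i,j\) are non-adjacent in \(\text{sk}(P)\), the definition of \(\text{sk}(P)\) yields at least one separating set \(C\) with \(i\ci j\cd C\); and (b) clauses~1 and~2 of Definition \ref{rule} are mutually exclusive, because if clause~1 fires via some \(C\) then \(i\ci j\cd C\) together with \(i\notci j\cd C\cup\{k\}\) forces \(k\notin C\), exhibiting a separating set of \(i,j\) that omits \(k\), which is exactly what clause~2 rules out. (When neither clause fires, item~\ref{i2} imposes nothing on that triple.)

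For the direction from the two right-hand conditions to ``\(G\) satisfies the orientation rule wrt \(P\)'', assume \(P\) is V-OUS and collider-stable wrt \(G\) and fix a v-configuration. If clause~1 assigns it a collider, some \(C\) has \(i\ci j\cd C\) and \(i\notci j\cd C\cup\{k\}\); were \(i\sim k\sim j\) a non-collider in \(G\), the V-OUS clause of Definition \ref{v-ousdef} would give \(i\ci j\cd C\cup\{k\}\), a contradiction, so \(G\) orients it as a collider. If clause~2 assigns it a non-collider, every separating set of \(i,j\) contains \(k\); were \(i\rightarrow k\leftarrow j\) a collider in \(G\), the collider-stability clause of Definition \ref{v-ousdef} would supply some \(C'\subseteq V\backslash\{i,j,k\}\) with \(i\ci j\cd C'\), contradicting \(k\in C'\), so \(G\) orients it as a non-collider. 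Hence item~\ref{i2} holds, and item~\ref{i1} is the assumed skeleton equality.

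For the converse, assume \(G\) satisfies the orientation rule wrt \(P\). To verify V-OUS, take a non-collider \(i\sim k\sim j\) of \(G\) and suppose \(i\ci j\cd C\) yet \(i\notci j\cd C\cup\{k\}\) for some \(C\); then clause~1 assigns this triple a collider, so by item~\ref{i2} it is a collider in \(G\), contradicting the choice of a non-collider. To verify collider-stability, take a collider \(i\rightarrow k\leftarrow j\) of \(G\); observation~(a) gives a separating set of \(i,j\), and if every such set contained \(k\) then clause~2 would assign the triple a non-collider, forcing a non-collider in \(G\) by item~\ref{i2}, again a contradiction; so some separating set \(C'\subseteq V\backslash\{i,j,k\}\) exists, which is exactly collider-stability.

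The argument is a routine contraposition once this framework is set up; I expect the only points needing care are the bookkeeping around \(\text{sk}(P)\) — invoking its definition at observation~(a) to guarantee a separating set exists, and checking mutual exclusivity of the two rule clauses at observation~(b) so that ``\(G\) satisfies the orientation rule wrt \(P\)'' is a well-posed condition in the first place. Everything else reduces to matching the two clauses of Definition \ref{rule} against the V-OUS and collider-stability clauses of Definition \ref{v-ousdef}.
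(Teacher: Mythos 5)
Your proof is correct and follows essentially the same route as the paper's: fixing the skeleton equality so the v-configurations coincide and then, triple by triple, contraposing each clause of the orientation rule against the corresponding V-OUS or collider-stability clause. Your two preliminary observations (existence of a separating set from non-adjacency in \(\text{sk}(P)\), and mutual exclusivity of the two rule clauses) are sound and make the well-posedness slightly more explicit than the paper's terser treatment, but they do not change the argument.
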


We can now apply the V-OUS and collider-stable orientation rule wrt \(P\) to assign the v-configurations in \(\text{sk}(P)\). Note that the assignment may be incomplete, in the sense that some v-configurations may not satisfy either of the conditions in Definition \ref{rule} and are therefore unassigned. \emph{Modified V-stability} is then defined as when this ambiguity can be resolved using the constraint that the graph is a DAG, as illustrated in Figure \ref{exmodv}.
\begin{defn}[Modified V-stability]
    A distribution \(P\) is \emph{modified V-stable}, if the v-configurations of DAGs that satisfy the V-OUS and collider-stable orientation rule wrt \(P\) is unique.
\end{defn}

\begin{figure}[h]
\begin{center}
\includegraphics[width=\linewidth]{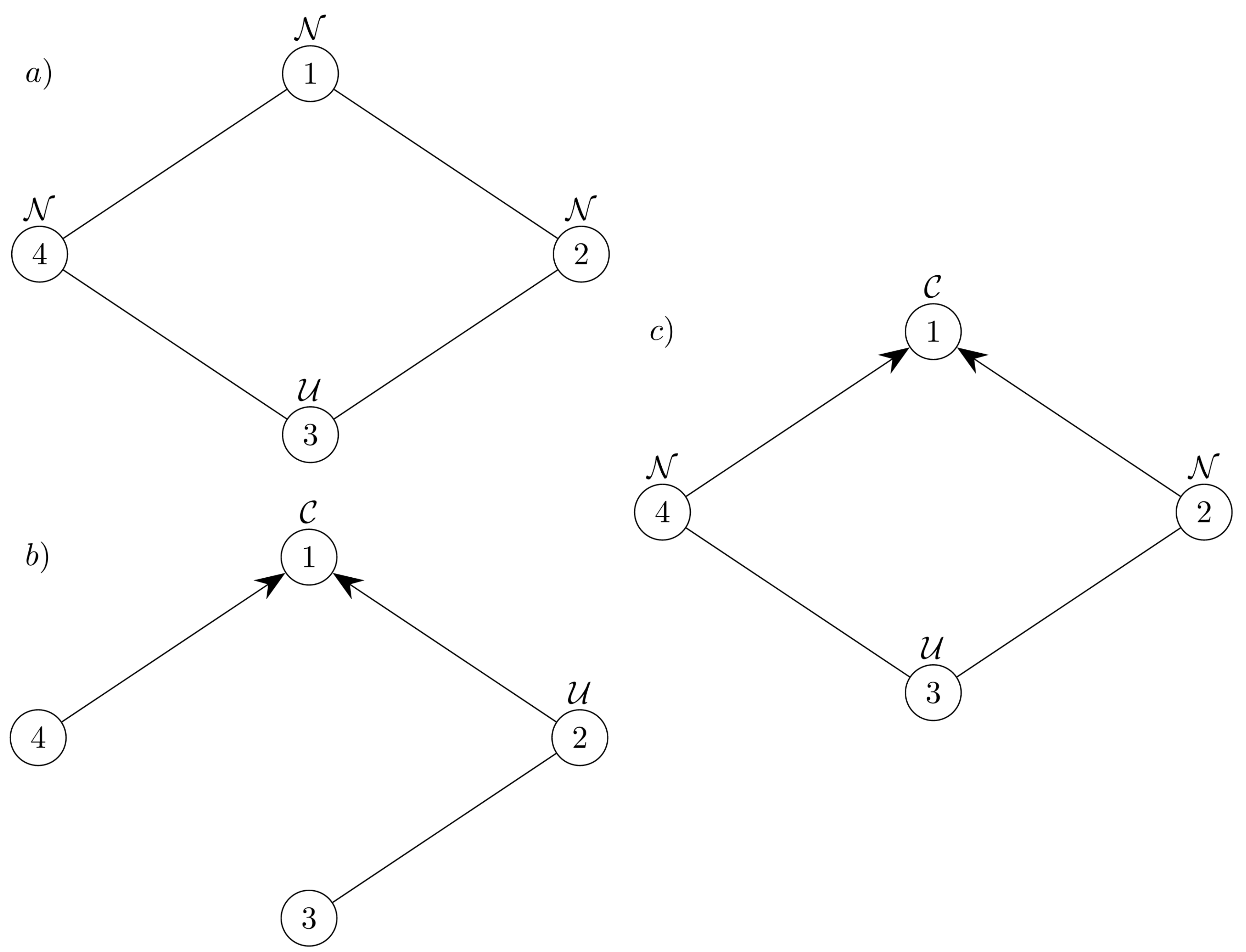}
\end{center}
\caption{Different assigned \(\text{sk}(P)\). Label \(\mathcal{N,C,U}\) denotes the v-configuration that is assigned to be a non-collider, collider and unassigned, respectively. In a), the unassigned v-configuration \(4\sim 3\sim 2\) is constrained to be a collider due to acylicity of a DAG. In b), from lack of bidirectedness in a DAG, unassigned v-configuration \(1\sim 2 \sim 3\) is constrained to be a non-collider. In c), since there exists DAGs such that v-configuration \(2\sim 3\sim 4\) can be either a collider or a non-collider, this ambiguity cannot be resolved. \\
\(P\) is modified V-stable if, after orienting \(\mathrm{sk}(P)\) via the V-OUS and collider-stable orientation rule wrt \(P\), the unassigned v-configurations in \(\mathrm{sk}(P)\) can be resolved using DAG constraints,  as in a) and b).}
\label{exmodv}
\end{figure}

Under V-stability, all v-configurations in \(\text{sk}(P)\) must satisfy either of the conditions in Definition \ref{rule}, leaving no v-configurations in \(\text{sk}(P)\) unassigned. 
Thus, 
V-stability implies modified V-stability.
\begin{remark}\label{modvequi}
    Combined with Proposition \ref{mnsrep}, we have the following equivalent notion of modified V-stability: all DAGs \(G\) to which \(P\) satisfies:
    \begin{enumerate}
        \item \(\text{\emph{sk}}(P)=\text{\emph{sk}}(G)\).
        \item \(P\) is V-OUS and collider-stable wrt \(G\).
    \end{enumerate}
    are Markov equivalent.
\end{remark}

We have the following, for the true causal graph \(G_0\):
\begin{theorem}[Sufficient and necessary consistency conditions for LoNS]\label{prop2}
 LoNS algorithms are consistent if and only if:
 \begin{enumerate}
     \item \(P\) satisfies adjacency faithfulness wrt \(G_0\).
     \item \(P\) is V-OUS and collider-stable wrt \(G_0\).
     \item P is modified V-stable.
 \end{enumerate}
\end{theorem}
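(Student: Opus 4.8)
The plan is to reduce the claim to a statement about the family
$\mathcal{G} := \{\, G : \text{sk}(G) = \text{sk}(P),\ P \text{ is V-OUS and collider-stable wrt } G \,\}$ of DAGs compatible with Definition \ref{lonsdef}. By that definition the output $G(P)$ of any LoNS algorithm lies in $\mathcal{G}$, and conversely (by modifying a given LoNS algorithm only on the input $P$) every member of $\mathcal{G}$ is a realisable LoNS output; so I will read ``LoNS algorithms are consistent'' as ``$\mathcal{G} \ne \emptyset$ and every $G \in \mathcal{G}$ is Markov equivalent to $G_0$''. Three elementary facts will be used throughout: (a) since $P$ is Markovian to $G_0$, any two non-adjacent nodes of $G_0$ are d-separated by some set, hence $\text{sk}(P) \subseteq \text{sk}(G_0)$, with equality precisely when $P$ is adjacency faithful wrt $G_0$ (the reverse implication being immediate from the definition of $\text{sk}(P)$); (b) by Proposition \ref{weak}, $P$ is collider-stable wrt $G_0$ automatically, so Condition 2 of the theorem is effectively just the V-OUS requirement; (c) V-OUS wrt a DAG depends only on its skeleton and the classification of its v-configurations into colliders and non-colliders, and Markov-equivalent DAGs share both skeleton and colliders.

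For the ``if'' direction I would assume Conditions 1--3. By (a), Condition 1 gives $\text{sk}(P) = \text{sk}(G_0)$; together with Condition 2 this is exactly the statement $G_0 \in \mathcal{G}$ (equivalently, by Proposition \ref{mnsrep}, $G_0$ satisfies the V-OUS and collider-stable orientation rule wrt $P$), so in particular $\mathcal{G} \ne \emptyset$. Condition 3, unpacked via Remark \ref{modvequi}, says precisely that all members of $\mathcal{G}$ are mutually Markov equivalent; since $G_0 \in \mathcal{G}$, every member of $\mathcal{G}$ is Markov equivalent to $G_0$. As every LoNS output lies in $\mathcal{G}$, consistency follows.

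For the ``only if'' direction I would assume LoNS algorithms are consistent, so $\mathcal{G} \ne \emptyset$ and we may fix $G^{\star} \in \mathcal{G}$ that is Markov equivalent to $G_0$. Equality of skeletons gives $\text{sk}(P) = \text{sk}(G^{\star}) = \text{sk}(G_0)$, which by (a) yields Condition 1. Since $G^{\star}$ and $G_0$ then have the same colliders, hence the same non-colliders, and $P$ is V-OUS wrt $G^{\star}$, fact (c) gives that $P$ is V-OUS wrt $G_0$; with (b) this is Condition 2. Finally, all members of $\mathcal{G}$ are Markov equivalent to $G_0$ and therefore to one another, so Remark \ref{modvequi} delivers Condition 3 (modified V-stability).

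The genuinely delicate point is the opening reduction rather than any of the subsequent deductions: one must justify that $\mathcal{G}$ is exactly the set of realisable LoNS outputs on input $P$ (Proposition \ref{mnsrep} supplies the orientation-rule description of $\mathcal{G}$, and Remark \ref{modvequi} the Markov-equivalence reformulation of modified V-stability), and one must handle the degenerate case in which no LoNS algorithm is well-defined on $P$ --- which is why I fold non-emptiness of $\mathcal{G}$ into the meaning of ``consistent''. Past that, the argument is bookkeeping with the definitions and facts (a)--(c); no computation is needed, the substantive content having already been absorbed into Proposition \ref{mnsrep} and Remark \ref{modvequi} (together with Proposition \ref{weak}).
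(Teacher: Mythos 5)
Your argument is correct and follows essentially the same route as the paper's: you reduce consistency to the set of realisable LoNS outputs characterised by Proposition \ref{mnsrep}, invoke Remark \ref{modvequi} to convert modified V-stability into mutual Markov equivalence of that set, and use the invariance of V-OUS and collider-stability under Markov equivalence (your fact (c), which is the paper's auxiliary Proposition in the supplement) for the only-if direction. Your explicit framing via the family $\mathcal{G}$ and the remark that every member of $\mathcal{G}$ is a realisable LoNS output merely makes precise a step the paper leaves implicit.
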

The following example shows that even when combined with \(P\) being adjacency faithful and V-OUS and collider-stable wrt \(G_0\), V-stability of \(P\) need not be implied:
\begin{example}
Let \(G_0\) be \(1\xrightarrow{}2\xrightarrow{}3\xleftarrow{}4\), and \(P\) induces all the conditional independence implied from the Markov property wrt \(G_0\) in addition to \(1\ci 3\). 

The v-configuration \(2\sim 3\sim 4\) in \(\mathrm{sk}(P)\) satisfies \(2\ci 4\) and \(2\notci 4\cd 3\), then after \(2\sim 3\sim 4\) is assigned as a collider, this constraints \(1\sim 2\sim 3\) to be a non-collider. Thus \(P\) is modified V-stable. Adjacency faithfulness is obvious.

Since we have \(1\ci 3\) and \(1\ci 3\cd 2\), we have that \(P\) is not V-stable, but V-OUS holds since 
\(1\ci 3\cd \{2, 4\}\). 
\end{example}

Thus the conditions in Theorem \ref{prop2} is strictly weaker than those in Proposition \ref{prop1}, which is already weaker than restricted faithfulness, and we will see in Section \ref{sim} that these conditions are different to the sufficient and necessary conditions of SP.

\subsubsection{Realising and Interpreting the V-OUS Condition}
V-OUS is implied by faithfulness. Here, without assuming faithfulness, we discuss cases in which the V-OUS property can still arise, and provide basic interpretations.

\begin{prop}[Conditional exchangability and composition imply V-OUS]\label{exc}
    Let \(P\) satisfy:
    \begin{enumerate}
        \item (Composition property). For all disjoint \(i,j,k,C\), the following holds: \(i\ci j\cd C \And i\ci k\cd C \Rightarrow i\ci \{j, k\}\cd C\).
        \item (Conditional exchangability). For all non-collider v-configuration \(i\sim k\sim j\) in \(G_0\), the marginal distribution of \(P\) on \(\{i,j,k\}\) conditioned on \(V\backslash\{i,j,k\}\) is exchangable.
    \end{enumerate}
    Then \(P\) satisfies V-OUS wrt \(G_0\).
\end{prop}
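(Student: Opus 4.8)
The plan is to fix a non-collider v-configuration $i\sim k\sim j$ in $G_0$ together with a set $C\subseteq V\backslash\{i,j,k\}$ satisfying $i\ci j\cd C$, and to deduce $i\ci j\cd C\cup\{k\}$; this is exactly the implication required by V-OUS. Write $D=V\backslash\{i,j,k\}$, so that $C\subseteq D$. The crux of the argument is the intermediate claim that $i\ci k\cd C$. Granting it, the composition property (Condition 1), applied to the pairwise disjoint sets $\{i\},\{j\},\{k\},C$, gives $i\ci\{j,k\}\cd C$, and then the weak union property of conditional independence --- valid for every distribution --- yields $i\ci j\cd C\cup\{k\}$, as desired.

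To establish $i\ci k\cd C$, I would first show that conditional exchangeability descends from $X_D$ to $X_C$. By Condition 2, the conditional law of $(X_i,X_j,X_k)$ given $X_D$ is invariant under every permutation of the three coordinates. Conditioning further down to $X_C$ via the tower property (using $\sigma(X_C)\subseteq\sigma(X_D)$) shows that the conditional law of $(X_i,X_j,X_k)$ given $X_C$ is again permutation-invariant. Specialising to the transposition that swaps the $j$- and $k$-coordinates gives $\mathcal{L}(X_i,X_j,X_k\mid X_C)=\mathcal{L}(X_i,X_k,X_j\mid X_C)$; marginalising out the last coordinate on both sides gives $\mathcal{L}(X_i,X_j\mid X_C)=\mathcal{L}(X_i,X_k\mid X_C)$, and keeping only the middle coordinate gives $\mathcal{L}(X_j\mid X_C)=\mathcal{L}(X_k\mid X_C)$.

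Now the hypothesis $i\ci j\cd C$ reads $\mathcal{L}(X_i,X_j\mid X_C)=\mathcal{L}(X_i\mid X_C)\otimes\mathcal{L}(X_j\mid X_C)$. Substituting the two identities just derived rewrites the left side as $\mathcal{L}(X_i,X_k\mid X_C)$ and the right side as $\mathcal{L}(X_i\mid X_C)\otimes\mathcal{L}(X_k\mid X_C)$, i.e.\ $i\ci k\cd C$. This, together with $i\ci j\cd C$, composition, and weak union, completes the proof as described above.

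The only genuinely delicate step is the descent of conditional exchangeability from $X_D$ to $X_C$: one cannot argue coordinatewise, since $(X_i,X_j,X_k)$ and $X_{D\backslash C}$ need not be conditionally independent given $X_C$. The clean route is the tower-property computation above, or equivalently the observation that $\mathcal{L}\big((X_i,X_j,X_k)\mid X_C=x_C\big)$ is the mixture of the permutation-invariant kernels $\mathcal{L}\big((X_i,X_j,X_k)\mid X_D=(x_C,y)\big)$ against $\mathcal{L}(X_{D\backslash C}\in dy\mid X_C=x_C)$, and a mixture of permutation-invariant laws is permutation-invariant. The remaining manipulations use only the assumed composition property and standard semi-graphoid rules.
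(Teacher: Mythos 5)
Your proof is correct and follows essentially the same route as the paper: first descend conditional exchangeability from $V\backslash\{i,j,k\}$ to an arbitrary $C$ by the mixture/tower argument, then combine exchangeability with composition to obtain the upward-stability implication $i\ci j\cd C\Rightarrow i\ci j\cd C\cup\{k\}$ for the conditional three-variable marginal. The only difference is that the paper delegates the second step to a cited external result (exchangeability plus composition is equivalent to upward stability), whereas you prove it directly by deriving $i\ci k\cd C$ from the transposition of $j$ and $k$ and then applying composition and weak union.
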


The composition property allows the deduction of joint independence from pairwise independence, and is satisfied by some common distributions such as Gaussians. The exchangability assumption is commonly made when nodes are indistinguishable from one another, such as in Bayesian theory. 

The V-OUS assumption can be interpreted as a prevention of Simpson's paradox on non-collider v-configuration \(i\sim k\sim j\), since all conditional independencies of \(i\) and \(j\) are preserved when conditioning on \(k\). 

\subsection{Construction of a LoNS algorithm}
Having described the LoNS algorithms, we provide a pseudocode of such an algorithm:
\begin{algorithm}
  \caption{\textbf{M}odifi\textbf{e}d V-stable \textbf{Lo}calised \textbf{N}atural \textbf{S}tructure Learning (Me-LoNS)}
  \textbf{Input}: Probability distribution \(P\)\\
  \textbf{Output}: DAG \(G(P)\)
  \begin{algorithmic}[1]
    \State Construct \(\text{sk}(P)\).
    \State Apply the V-OUS and collider-stable orientation rule wrt \(P\) to assign v-configurations in \(\text{sk}(P)\).
    \State Solve for a DAG \(G(P)\) having skeleton \(\text{sk}(P)\) and satisfy the assignment of v-configurations. If no solution exists, \textbf{return} error.
    \State \textbf{return} \(G(P)\).
  \end{algorithmic}
\end{algorithm}\\

\begin{remark}
    Generally, Me-LoNS differs from PC \citep{const} only in determining whether v-configurations in \(\text{\emph{sk}}(P)\) is a collider. 
    
    Since we have to check conditional independence statements of all subsets, the algorithm have exponential time complexity which is comparable to the skeleton building step of PC, and is a big improvement compared to the factorial running time of SP.

    Since the running time of greedy versions of SP based on depth-first search \citep{UhlGSP, lam} are generally dependent on the depth parameter, it is not obvious whether the running time of Me-LoNS is an improvement.
\end{remark}

Note that Me-LoNS outputs a DAG, and since in the observational causal learning setting we are interested in the corresponding MEC, we can always convert the DAG into CP-DAG which is a graphical object uniquely representing a MEC, for example via the \verb|dag2cpdag| function in the \verb|causal-learn| Python package \citep{python}.

\begin{prop}[Me-LoNS is a LoNS algorithm]\label{propconst} 
Me-LoNS is a LoNS algorithm if and only if there exists a DAG \(G\) to which \(P\) satisfies the following:
\begin{enumerate}
    \item \(\text{\emph{sk}}(P)=\text{\emph{sk}}(G)\).
    \item \(P\) is V-OUS and collider-stable wrt \(G\).
\end{enumerate}
\end{prop}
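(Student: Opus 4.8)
The plan is to reduce the statement to the characterisation already established in Proposition \ref{mnsrep}, which equates ``a DAG \(G\) satisfies the V-OUS and collider-stable orientation rule wrt \(P\)'' with the two conditions ``\(\text{sk}(P)=\text{sk}(G)\)'' and ``\(P\) is V-OUS and collider-stable wrt \(G\)''. With this in hand, proving Proposition \ref{propconst} amounts to tracing through Steps 1--3 of Me-LoNS and observing that the algorithm succeeds (returns a DAG rather than an error, whose output then meets the conditions of Definition \ref{lonsdef}) exactly when some DAG satisfying the orientation rule exists. I read ``Me-LoNS is a LoNS algorithm'' on a fixed input \(P\) as: Step 3 finds a solution and the returned DAG \(G(P)\) satisfies Conditions 1--2 of Definition \ref{lonsdef}.

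For the forward direction, suppose Me-LoNS is a LoNS algorithm on \(P\). Then it returns a DAG \(G(P)\) with \(\text{sk}(P)=\text{sk}(G(P))\) and with \(P\) V-OUS and collider-stable wrt \(G(P)\), by Definition \ref{lonsdef}. Taking \(G:=G(P)\) immediately exhibits a DAG satisfying the two listed conditions, so this direction is essentially by definition.

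For the converse, assume there is a DAG \(G\) with \(\text{sk}(P)=\text{sk}(G)\) and \(P\) V-OUS and collider-stable wrt \(G\). First I would invoke Proposition \ref{mnsrep} to conclude that \(G\) satisfies the V-OUS and collider-stable orientation rule wrt \(P\); in particular \(G\) has skeleton \(\text{sk}(P)\) and realises every v-configuration of \(\text{sk}(P)\) assigned by Step 2 as a collider (resp.\ non-collider) in the way Step 2 prescribes. Hence the search in Step 3 is feasible --- \(G\) is one admissible solution --- so Me-LoNS does not error and outputs some DAG \(G(P)\) with skeleton \(\text{sk}(P)\) that honours the Step 2 assignment. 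By construction \(G(P)\) then satisfies the orientation rule wrt \(P\): the set of v-configurations is determined solely by the common skeleton \(\text{sk}(P)\), the assigned ones are realised correctly, and the unassigned ones impose no constraint in either the rule or the algorithm. Applying the other direction of Proposition \ref{mnsrep} to \(G(P)\) gives \(\text{sk}(P)=\text{sk}(G(P))\) and \(P\) V-OUS and collider-stable wrt \(G(P)\), i.e.\ Me-LoNS is a LoNS algorithm.

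I do not expect a genuine obstacle here; the delicate points are bookkeeping. The first is making the interpretation of ``Me-LoNS is a LoNS algorithm on input \(P\)'' precise enough that the error branch of Step 3 must be explicitly excluded in the converse. The second is verifying that \emph{every} DAG Step 3 might return --- not merely the witness \(G\) --- satisfies the orientation rule, which holds because Step 3 enforces precisely the skeleton and the collider/non-collider labelling that define the rule. Beyond these, the argument is a direct double application of Proposition \ref{mnsrep}.
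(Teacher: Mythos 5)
Your proof is correct and follows essentially the same route as the paper: both directions reduce to a double application of Proposition \ref{mnsrep}, with the existence of a witness DAG equivalent to feasibility of Step 3 and the returned DAG inheriting the orientation rule (your forward direction simply takes \(G:=G(P)\) directly, which is the contrapositive of the paper's ``only if'' argument).
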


The consistency conditions of Me-LoNS is then given in Theorem \ref{prop2}. Note that modified V-stability of input distribution \(P\) ensures that the output of Me-LoNS is unique up to MEC. 
\begin{remark}
    The orientation rule of Me-LoNS is similar to conservative PC (CPC) \citep{rams} in the sense that:
    \begin{enumerate}
        \item Both assign non-colliders similarly.
        \item Both allow for 
        ambiguous or unassigned 
        v-configurations. 
        \item Both have a criterion when the consistency condition relating distribution \(P\) and true causal graph \(G_0\) is violated; if Me-LoNS errors, there is no DAG that satisfies the conditions in Proposition \ref{propconst} (by applying the characterisation in Proposition \ref{mnsrep}). 
    \end{enumerate}However, Me-LoNS relaxes the restricted-faithfulness condition of CPC by orienting colliders differently.
\end{remark}
\section{Simulation and Theoretical Comparisons}\label{sim}
We will compare the consistency conditions of Me-LoNS to some existing constraint-based causal learning algorithms both theoretically and via simulations using the \verb|causal-learn| package \citep{python} in Python. Me-LoNS is implemented via the following steps:
\begin{enumerate}
    \item Use the same skeleton discovery function as PC.
    \item Make a new orientation function based on the new orientation rule.
    \item Use the \verb|scipy.optimize| package to solve the DAG search problem.
\end{enumerate}
We will be using mixed linear integer programming with a constant objective to solve Step 3 of Me-LoNS. In addition to the layered network (LN) formulation from \citet{shoj}, we introduce additional constraints from Step 2 of Me-LoNS as follows:
\begin{align*}
    z_{ik}=z_{jk}=1 && \forall i\sim k\sim j\in \mathcal{C}\\
    z_{ik}+z_{jk}\leq 1 && \forall i\sim k\sim j\in \mathcal{N}
\end{align*}
where \(z_{ij}=1\) if \(i\xrightarrow[]{}j\), and \(z_{ij}=0\) if \(i\xleftarrow[]{}j\), and \(\mathcal{C,N}\) the set of v-configurations in \(\text{sk}(P)\) that are assigned to be colliders and non-colliders respectively by the V-OUS and collider-stable orientation rule wrt \(P\) in Step 2 of Me-LoNS. 

Within each comparison, we will simulate data from the same structural equation model (SEM), with corresponding causal graph \(G_0\) to obtain a total of 1,000,000 samples. These samples are then subdivided into 100 test units of 10,000 samples each. From these 100 tests, we compare the percentage of tests the algorithms return the consistent output (output is Markov equivalent to \(G_0\)). Whenever conditional independence testing is needed, the \verb|fisherz| conditional independence test from the package is used throughout with a significance of 0.05. To test for Markov equivalence of the true causal graph \(G_0\) and the output graph of the algorithm, the \verb|mec_check| function is used. 
\begin{remark}
    Although Me-LoNS is deterministic, due to conditional independence testing being used in simulations, the simulation output is non-deterministic. The simulations aim to investigate how well the theoretical results (stated purely in conditional independencies) hold up under the randomness of conditional independence testing.
\end{remark}

\subsection{Comparison to the PC algorithm}

Me-LoNS strictly generalises PC, as in the following:
\begin{prop}[Me-LoNS strictly generalises PC]\label{pcsucks}
    If \(P\) is V-stable, then the outputs of both PC and Me-LoNS are Markov equivalent. Furthermore, there exist distribution \(P\) and true causal graph \(G_0\), such that Me-LoNS is consistent but not PC.
\end{prop}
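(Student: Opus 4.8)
The statement splits into a ``forward'' agreement claim and an ``existence'' claim, and I would handle them separately.

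For the first claim (V-stability $\Rightarrow$ PC and Me-LoNS agree up to Markov equivalence), the plan is to reduce to the classical characterisation that two DAGs are Markov equivalent iff they have the same skeleton and the same unshielded colliders. Both algorithms return a graph with skeleton $\text{sk}(P)$ — for Me-LoNS by its first step together with Proposition \ref{propconst}, and for PC as the output of its skeleton phase — so it suffices to show the two outputs carry the same unshielded colliders. Here I would first make precise the remark already in the text that V-stability makes the orientation rule of Definition \ref{rule} \emph{total}: for a v-configuration $i\sim k\sim j$ of $\text{sk}(P)$ there is at least one separating set $C$ with $i\ci j\cd C$ since $i,j$ are non-adjacent; if some such $C$ omits $k$ then V-stability forbids $i\ci j\cd C\cup\{k\}$, so Condition 1 fires and $i\sim k\sim j$ is assigned a collider; otherwise $k$ lies in every separating set and Condition 2 fires, assigning a non-collider. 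Thus the Me-LoNS status is ``collider'' iff some separating set of $\{i,j\}$ omits $k$ and ``non-collider'' iff every separating set of $\{i,j\}$ contains $k$, and these alternatives are exhaustive and mutually exclusive.

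It then remains to match this with PC, which orients $i\to k\leftarrow j$ exactly when the separating set it happened to record for $\{i,j\}$ omits $k$. This agrees with the Me-LoNS status precisely when the separating sets of a v-configuration are ``uniform'' in the middle vertex — either all of them contain $k$ or none does. Proving this uniformity, using V-stability (and presumably the standing assumption that $P$ is Markovian to $G_0$, so that the v-configuration is backed by a genuine d-separation pattern rather than accidental independences), is the step I expect to be the real obstacle; everything else is bookkeeping plus the Markov-equivalence characterisation. With uniformity in hand, the two outputs share skeleton and unshielded colliders, hence are Markov equivalent.

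For the existence claim I would reuse the example stated just before Theorem \ref{prop2}: $G_0$ equal to $1\to 2\to 3\leftarrow 4$ and $P$ the Markov closure of $G_0$ augmented by $1\ci 3$. There $P$ is adjacency faithful, V-OUS and collider-stable wrt $G_0$, and modified V-stable, so Theorem \ref{prop2} makes Me-LoNS (indeed any LoNS algorithm) consistent on this instance. PC, however, finds $\text{sk}(P)$ to be the path $1-2-3-4$, records the empty set as a separating set of the non-adjacent marginally-independent pair $\{1,3\}$, and so orients $1\to 2\leftarrow 3$; together with the correctly discovered collider $2\to 3\leftarrow 4$ this forces incompatible orientations of the edge $2\sim 3$, so PC cannot return a graph Markov equivalent to $G_0$. (This $P$ is itself not V-stable, as noted in the example, so it is consistent with the first claim.) The only routine checks are that PC's skeleton phase really returns $\text{sk}(P)$ under the operative hypotheses, and that the example genuinely has the four listed properties — the latter is essentially spelled out in the paper.
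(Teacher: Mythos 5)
For the first claim you follow the same route as the paper: reduce to the skeleton-plus-colliders characterisation of Markov equivalence and show that under V-stability the orientation rule of Definition \ref{rule} becomes total and complementary (collider iff some separating set of $\{i,j\}$ omits $k$, non-collider iff every separating set contains $k$). The ``uniformity'' step you single out as the real obstacle --- that PC's sepset-based decision agrees with these existential/universal conditions --- is exactly the point at which the paper's own proof stops: after deriving the reduced rule it simply asserts ``this is the same as in PC''. So you have correctly located the load-bearing step, but be aware the paper does not supply the uniformity argument either. V-stability by itself does not obviously exclude a v-configuration $i\sim k\sim j$ having one separating set that contains $k$ and another that omits it (V-stability only forbids $i\ci j\cd C$ and $i\ci j\cd C\cup\{k\}$ holding simultaneously for the \emph{same} $C$), and in that situation a sepset-based PC could record the set containing $k$ and declare a non-collider while the reduced Me-LoNS rule declares a collider. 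Closing this requires either a convention on which separating set PC records or a further argument; as written, your part 1 is no less complete than the paper's, but it is not a finished proof.

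For the existence claim you take a genuinely different route: you reuse Example 1 ($1\to 2\to 3\leftarrow 4$ with the extra independence $1\ci 3$), whereas the paper uses SEM \ref{sem2} (colliders $X_1\to X_3\leftarrow X_2$ and $X_1\to X_4\leftarrow X_2$ with the extra independence $X_1\ci X_2\cd\{X_3,X_4\}$) and argues that PC mis-orients $X_1\to X_3\leftarrow X_2$ as a non-collider. Your example is arguably cleaner: the four hypotheses of Theorem \ref{prop2} are already verified in the paper's Example 1, so Me-LoNS is consistent, while PC's empty recorded sepsets for $\{1,3\}$ and $\{2,4\}$ force collider orientations at both $2$ and $3$ and hence a conflict on the edge $2\sim 3$. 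The one caveat --- shared by the paper's version, which implicitly relies on the conservative/maxP orientation variant --- is that the conclusion ``PC is not consistent'' depends on which formalisation of PC and which conflict-resolution convention one adopts (a naive implementation that overwrites orientations could accidentally land on $G_0$); you should state explicitly which version of PC you are refuting.
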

\begin{remark}
    In general, PC outputs a representative of a MEC (CP-DAG). Proposition \ref{pcsucks} states that under V-stability, Me-LoNS returns a DAG that is of the MEC represented by the CP-DAG output of PC regardless of violations of consistency conditions.
\end{remark}
\begin{align}
  \epsilon_i &\stackrel{\text{i.i.d.}}{\sim} N(0,1),  i=1,2,3,4\nonumber\\
  X_1 &= \epsilon_1\nonumber\\
  X_2 &=  \epsilon_2\nonumber\\
  X_3 &=  -6X_1+2X_2+\epsilon_3\nonumber\\
  X_4 &=   3X_1+4X_2+\epsilon_4\label{sem2}
\end{align}

\begin{figure}[h]
\begin{center}
\includegraphics[width=0.9\linewidth]{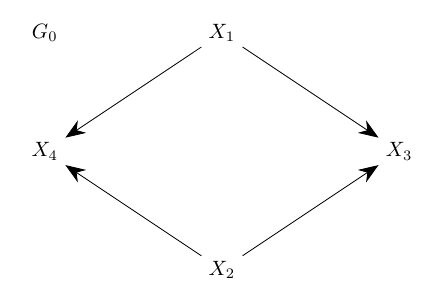}
\end{center}
\caption{SEM \ref{sem2} corresponds to the DAG \(G_0\).}
\label{ex2}
\end{figure}

To illustrate Proposition \ref{pcsucks}, we compare Me-LoNS with the PC algorithm from the package using the \verb|definiteMaxP| orientation rule which orients only definite colliders and definite non-colliders (thus in this setting, PC coincides with CPC). The input distribution \(P\) will be induced by SEM \ref{sem2}, having all the conditional independencies implied by the Markovian property wrt \(G_0\) in Figure \ref{ex2}, in addition to 
\(X_1\ci X_2 \cd \{X_3,X_4\}\). 

PC fails to identify the colliders in Figure \ref{ex2}, due to violation of orientation faithfulness. This is reflected in Table \ref{pcres}.
\begin{table}
    \centering
    \caption{Percentage of simulations from SEM \ref{sem2} that the algorithm returns a consistent output.}\label{pcres}
    \begin{tabular}{cc}
      \toprule 
      \bfseries PC & \bfseries Me-LoNS\\
      \midrule 
      8\% & 90\%\\
      \bottomrule 
    \end{tabular}
\end{table}

\subsection{Comparison to Sparsest Permutation (SP) algorithm}
The \emph{Sparsest Markov Representation (SMR)} assumption is the sufficient and necessary consistency condition for SP \citep{UhlSP}, and it is strictly different to the consistency conditions of Me-LoNS in Theorem \ref{prop2}, 
as the following example shows:
\begin{example}[Me-LoNS and SP are different/incomparable]\label{spsucks}
(SMR, but not conditions in Theorem \ref{prop2}). Consider the example from \citet{UhlSP}. Let \(G_0\) be the following:
        \begin{center}
          \begin{tikzpicture}[>=stealth]
\node (e1) at (0,1) {$X_1$};
\node (e2) at (2,0) {$X_2$};
\node (e3) at (0,-1) {$X_3$};
\node (e4) at (-2,0) {$X_4$};
\draw [-{Stealth[scale=1.5]}] (e1) to (e2);
\draw [-{Stealth[scale=1.5]}] (e2) to (e3);
\draw [-{Stealth[scale=1.5]}] (e3) to (e4);
\draw [-{Stealth[scale=1.5]}] (e1) to (e4);

\end{tikzpicture}
\end{center}
 and \(P\) implies the conditional independencies \(X_1 \ci X_3 \cd X_2\) and 
 \(X_2\ci X_4 \cd \{X_1, X_3\}\) and 
 \(  X_1 \ci X_2 \cd X_4\). It can be seen that SMR holds, but adjacency faithfulness is violated.\\
        
(Conditions in Theorem \ref{prop2}, but not SMR). Consider the graph \(G_0\) in Figure \ref{ex1}, with \(P\) implying the conditional independencies \(X_2\ci X_3\) and \(X_1\ci X_3\) and 
\(X_2\ci X_3\cd \{X_1,X_4\}\). 
Here, adjacency faithfulness holds. V-OUS holds since there are no non-colliders to check in \(G_0\), and modified V-stability of \(P\) also holds since V-stability of \(P\) holds. 
        
        \(P\) is Markovian to both \(G_0\) and \(G'\) where \(G'\) differs from \(G_0\) by flipping the edge \(X_2\xrightarrow{} X_4\). \(G_0\) and \(G'\) are both sparsest Markovian graphs to \(P\), but are not Markov equivalent, thus SMR does not hold.

        Note that this counter-example hinges on the fact that singleton transitivity of \(P\) does not hold otherwise we would have \(X_2\ci X_4\cd X_1\) or \(X_3\ci X_4\ci X_1\), violating adjacency faithfulness, thus \(P\) cannot be Gaussian.\\
        
\end{example}
\begin{align}
  \epsilon_i, \phi_j &\stackrel{\text{i.i.d.}}{\sim} \mathrm{Bern}(\tfrac{1}{2}), i=1,\ldots,4, j=1,\ldots,5\nonumber\\
  X_1 &= (\phi_1, \phi_2, \epsilon_1)\nonumber\\
  X_2 &=  (X^1_1,\phi_3, \epsilon_2)\nonumber\\
  X_3 &=  (\phi_4,\phi_5, \epsilon_3)\nonumber\\
  X_4 &=   (X_1^1+X_3^1, X_2^1+X_3^2, X_2^2, \epsilon_4)\label{sem1}
\end{align}
Here the \(+\) in the structural assignment of \(X_4\) in SEM \ref{sem1} denotes regular addition, and \(X^j_i\) denotes the \(j\)-th entry from the left of \(X_i\).

\begin{figure}[h]
\begin{center}
\includegraphics[width=0.9\linewidth]{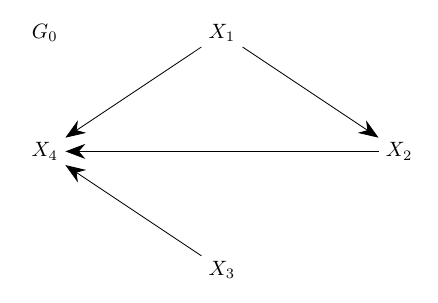}
\end{center}
\caption{SEM \ref{sem1} corresponds to the DAG \(G_0\).}
\label{ex1}
\end{figure}
\begin{remark}
    In the supplementary material, \(X_1\ci X_3\cd X_2\) and \(X_2\ci X_3\cd X_1\) are not needed for the example, and is merely a byproduct from the construction of SEM \ref{sem1}.
\end{remark}
\begin{figure*}[t]
\begin{center}
\includegraphics[width=0.9\linewidth]{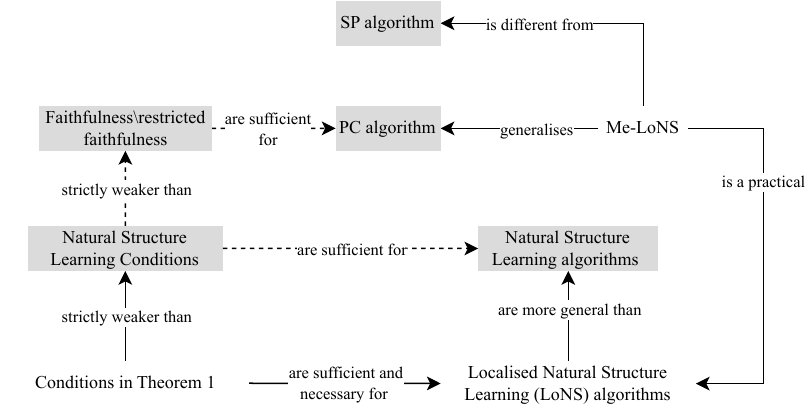}
\end{center}
\caption{Diagram relating the results of this paper, with the dotted arrows and shaded texts indicating existing results and the rest of diagram being novel contributions.}
\label{chart1}
\end{figure*} 
Since greedy versions of the SP algorithm have stronger consistency conditions, Example \ref{spsucks} shows that Me-LoNS is a viable alternative to all greedy variants of SP since Me-LoNS works under different conditions.

To illustrate the incomparability of Me-LoNS and SP from Example \ref{spsucks}, we compare Me-LoNS to the implementation of SP in the  package, \emph{greedy relaxation of sparsest permutation (GRaSP)} \citet{lam}. The input distribution \(P\) will be induced by SEM \ref{sem1}, having all the conditional independencies in Example \ref{spsucks} in addition to \(X_2\ci X_3\cd X_1\) and \(X_1\ci X_3\cd X_2\).

\begin{remark}
    GRaSP cannot differentiate the direction of edge \(X_2\xrightarrow{} X_4\) in Figure \ref{ex1}, thus it returns a consistent output about half the time, as shown in Table \ref{spres}. In the case of \(G_0\) being comprised of \(n\) disconnected components, with each component being the \(G_0\) in Figure \ref{ex1}, GRaSP will then return a consistent output about \(\tfrac{1}{2^n}\) of the time.
\end{remark}

\begin{table}
    \centering
    \caption{Percentage of simulations from SEM \ref{sem1} that the algorithm returns a consistent output.}\label{spres}
    \begin{tabular}{cc}
      \toprule 
      \bfseries GRaSP & \bfseries Me-LoNS\\
      \midrule 
      56\% & 94\%\\
      \bottomrule 
    \end{tabular}
\end{table}

\section{Conclusion and Future Work}
The contributions of this paper can be summarised in Figure \ref{chart1}:

The proposed Me-LoNS algorithm has the following desirable properties:
\begin{enumerate}
    \item It is a strict generalisation the PC algorithm, and is consistent under strictly different conditions than SP.
    \item It has exponential run time which is comparable to the skeleton building step of SGS.
\end{enumerate}
Hence, Me-LoNS provides another option for an algorithm that is consistent strictly beyond faithfulness, but runs in exponential time which is better than the factorial running time of SP algorithm \citep{UhlSP}. Although there exist speed-ups of the SP algorithm, such as ones based on greedy search like GRaSP used in the Section \ref{sim}, these algorithms are faster at the cost of stronger consistency conditions \citep{UhlGSP, lam}.

Note that the work done is focused on DAGs, it may be possible to extend the work done to ancestral graphs, which represents causal systems with latent variables, since the notion of ordered upward and downward stabilities are well defined for anterial graphs in general \citep{orderfaith}.

\putbib[uai2024-template.bib]
\end{bibunit}
\newpage

\onecolumn

\title{Localised Natural Causal Learning Algorithms for Weak Consistency Conditions\\(Supplementary Material)}
\maketitle

\appendix
    \renewcommand{\bibsection}{\subsubsection*{References (Supplementary Material)}}

\begin{bibunit}[plainnat]
\section{Proofs}
We will use the following well known results:
\begin{prop}[\citep{lauritzen1996}]\label{simp}
    If   \(P\) is Markovian to \(G\), then \(P\) is pairwise Markovian to \(G\); 
    that is, for every non-adjacent \(i,j\), we have \( i\ci j\cd \text{\emph{an}}_G(i,j)\).
\end{prop}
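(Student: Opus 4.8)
The plan is to reduce the claim to the purely graphical statement that non-adjacent \(i\) and \(j\) are \(d\)-separated given \(\text{an}_G(i,j)\), and then to invoke the Markov property of \(P\) directly. Write \(S:=\text{an}_G(i,j)\) and \(D:=S\cup\{i,j\}\), so that \(D\) is the smallest set containing \(i\) and \(j\) together with all strict ancestors of either. The first step is to verify that \(D\) is ancestrally closed in \(G\): any ancestor of a node of \(D\) is an ancestor of \(i\) or of \(j\), hence lies in \(S\) unless it equals \(i\) or \(j\), in which case it lies in \(D\) anyway; acyclicity is what rules out the degenerate possibility of, say, \(i\) being a strict ancestor of one of its own ancestors.

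The second step is to apply the standard moralization criterion for \(d\)-separation \citep{lauritzen1996}: since \(\{i\}\cup\{j\}\cup S=D\) is contained in (indeed equals) an ancestral set, \(i\perp_G j\cd S\) holds if and only if \(i\) and \(j\) are separated by \(S\) in the moral graph \((G_D)^m\) of the induced subgraph \(G_D\). As \(S=D\setminus\{i,j\}\) exhausts the remaining vertices, this separation holds precisely when \(i\) and \(j\) are non-adjacent in \((G_D)^m\).

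The crux is then to show that moralization does not introduce an edge between \(i\) and \(j\). Such an edge would require either that \(i\) and \(j\) are already adjacent in \(G_D\)---equivalently, adjacent in \(G\), since \(G_D\) is an induced subgraph---or that they have a common child \(k\in D\). The latter is impossible: \(k\) cannot equal \(i\) or \(j\) (that would be a self-loop), and \(k\) cannot be a strict ancestor of \(i\) or of \(j\) (composing \(i\to k\) or \(j\to k\) with a directed path from \(k\) back to \(i\) or \(j\) would close a directed cycle), so no such \(k\) lies in \(D\). Hence, since \(i\) and \(j\) are non-adjacent in \(G\), they are non-adjacent in \((G_D)^m\), so \(i\perp_G j\cd S\), and the Markov property of \(P\) yields \(i\ci j\cd \text{an}_G(i,j)\).

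I expect the only delicate points to be the acyclicity bookkeeping---both in checking that \(D\) is ancestral and in excluding co-parent edges at moralization---together with the correct invocation of the moralization theorem, which requires the conditioning set to sit inside an ancestral set; here that holds by construction. A self-contained alternative avoiding moralization is also available: on any path between \(i\) and \(j\) that is active given \(S\), every collider must be an ancestor of some element of \(S\) and hence lies in \(S\), while every non-collider must avoid \(S\); tracing the edge orientations along a shortest such path then forces a directed cycle, a contradiction. I would present the moralization route as the primary argument and fall back to this one only if a proof not relying on the moral-graph characterisation is preferred.
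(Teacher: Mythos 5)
Your argument is correct. Note that the paper does not prove this proposition at all---it is stated as a known result and cited to Lauritzen (1996)---so there is no in-paper proof to compare against; your moralization route is essentially the standard textbook derivation of the pairwise Markov property from the global one. The two delicate points you flag (that \(D=\text{an}_G(i,j)\cup\{i,j\}\) is ancestrally closed, and that moralization cannot join \(i\) and \(j\) because a common child in \(D\) would force a directed cycle) are handled correctly, and the invocation of the moral-graph separation criterion with conditioning set \(S=D\setminus\{i,j\}\) is the right one.
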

\begin{prop}[\citep{pearl}]\label{Markovchar}
   The DAGs \(G_1\) and \(G_2\) are Markov equivalent if and only if:
    \begin{enumerate}
        \item \(\text{\emph{sk}}(G_1)=\text{\emph{sk}}(G_2)\).
        \item The set of colliders in \(G_1\) coincides with the set of colliders in \(G_2\).
    \end{enumerate}
\end{prop}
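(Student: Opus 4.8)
The plan is to work throughout at the purely graphical level, since by the definition of a Markov equivalence class two DAGs are Markov equivalent exactly when they entail the same graphical separations \(A\perp_G B\cd C\); no probability distribution need enter the argument. I would prove the two directions separately.

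For the forward direction (Markov equivalent \(\Rightarrow\) same skeleton and colliders), I would first recover the skeleton from separability: two nodes \(i,j\) are non-adjacent in \(G\) iff some set \(C\) gives \(i\perp_G j\cd C\), which holds by taking \(C=\text{an}_G(i,j)\) (the graphical pairwise-Markov fact underlying Proposition \ref{simp}), whereas adjacent nodes can never be separated. Since this criterion is phrased purely in terms of separations, Markov equivalent graphs share it and hence share the skeleton. I would then detect colliders from separations via the key observation: for a v-configuration \(i\sim k\sim j\) (so \(i,j\) non-adjacent), it is a collider iff there exists a separating set \(C\) with \(i\perp_G j\cd C\) and \(k\notin C\). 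Indeed, if \(i\to k\leftarrow j\) then \(k\notin\text{an}_G(i,j)\), so \(C=\text{an}_G(i,j)\) separates \(i,j\) and omits \(k\); conversely, for a non-collider the length-two path through \(k\) is active whenever \(k\notin C\), forcing every separating set to contain \(k\). This is again a separation-only criterion, so equivalent graphs must agree on colliders.

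For the converse (same skeleton and colliders \(\Rightarrow\) same separations), I would argue by contraposition using a shortest-path argument. Suppose \(A\perp_{G_1}B\cd C\) but the separation fails in \(G_2\); take a shortest \(C\)-active path \(\pi=\langle v_0,\dots,v_m\rangle\) in \(G_2\). Since the skeletons coincide, \(\pi\) is also a path in \(G_1\), and since it is not active there, some consecutive triple \(v_{l-1},v_l,v_{l+1}\) must change collider/non-collider status between the two graphs. Because the two graphs have the same unshielded colliders, any such triple must be shielded, i.e. \(v_{l-1}\) and \(v_{l+1}\) are adjacent in the common skeleton. I would then show that this adjacency permits the construction of a strictly shorter \(C\)-active path in \(G_2\), by short-cutting along the edge \(v_{l-1}-v_{l+1}\) after a case analysis on the local orientations and on whether the relevant colliders have descendants in \(C\), contradicting minimality of \(\pi\).

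The main obstacle is precisely this final short-cutting step: converting a shielded status-discrepancy into a strictly shorter active path requires a careful case analysis of the orientations at the shielded triple together with the activation conditions at the neighbouring vertices, and this is where the bulk of the work lies. By contrast, the forward direction and the reduction to shielded triples are comparatively routine once the separation-only characterisations of adjacency and of colliders are in place.
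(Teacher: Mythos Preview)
The paper does not supply its own proof of this proposition: it is listed among the ``well known results'' at the start of the appendix and is simply attributed to \citet{pearl} without argument. So there is nothing to compare your proposal against line-by-line.

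That said, your plan is essentially the standard Verma--Pearl argument and is sound in outline. The forward direction is exactly right: adjacency and (unshielded) collider status are each characterised by separation-only predicates, so Markov-equivalent DAGs must agree on both. For the converse, the shortest-active-path/short-cutting strategy you describe is a known route to the result; your reduction to a shielded discrepancy triple is correct given the paper's convention that ``collider'' already means \emph{unshielded} collider. You are also right that the real work is the case analysis showing the short-cut path \(v_0,\dots,v_{l-1},v_{l+1},\dots,v_m\) remains \(C\)-active in \(G_2\): one must check the new triples at \(v_{l-1}\) and \(v_{l+1}\), and in some subcases route through a descendant of a collider in \(C\) rather than cut directly. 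This is tedious but not deep, and your proposal flags it accurately. If you were to write this up in full you could alternatively cite one of the textbook treatments (e.g.\ via the moralised ancestral graph characterisation of d-separation), which sidesteps the explicit path surgery.
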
 
\begin{prop}[\citep{rams}]\label{adjf}
    \(\text{\emph{sk}}(P)=\text{\emph{sk}}(G)\) if and only if \(P\) is adjacency faithful wrt \(G\).
\end{prop}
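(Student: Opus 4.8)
The plan is to prove the two directions separately, in each case reducing the claim to the equality of the edge sets of \(\text{sk}(P)\) and \(\text{sk}(G)\) and verifying it pair by pair. Throughout I use the standing assumption that \(P\) is Markovian to \(G\), which is needed for one of the two directions.

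For the forward direction I would assume \(\text{sk}(P)=\text{sk}(G)\) and pick any \(i,j\) adjacent in \(G\). Adjacency in \(G\) then means adjacency in \(\text{sk}(P)\), and unwinding the definition of \(\text{sk}(P)\) this says precisely that no \(C\subseteq V\backslash\{i,j\}\) satisfies \(i\ci j\cd C\), i.e. \(i\notci j\cd C\) for every such \(C\). That is exactly the defining condition of adjacency faithfulness wrt \(G\), so this direction is just a matter of unpacking definitions.

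For the converse, assume \(P\) is adjacency faithful wrt \(G\); I would show the two skeletons have the same edges by a case split on whether \(i\) and \(j\) are adjacent in \(G\). If \(i\) is adjacent to \(j\) in \(G\), adjacency faithfulness gives \(i\notci j\cd C\) for all \(C\subseteq V\backslash\{i,j\}\), so by definition \(i\) is adjacent to \(j\) in \(\text{sk}(P)\). If \(i\) is not adjacent to \(j\) in \(G\), I would invoke Markovianity of \(P\) wrt \(G\) together with Proposition \ref{simp}, which gives pairwise Markovianity: \(i\ci j\cd \text{an}_G(i,j)\). Since \(\text{an}_G(i,j)\subseteq V\backslash\{i,j\}\), this exhibits a separating set, hence \(i\) is not adjacent to \(j\) in \(\text{sk}(P)\). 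Combining the two cases yields \(\text{sk}(P)=\text{sk}(G)\).

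The only real content is the non-adjacent case of the converse, where the Markov property is essential — without it the statement fails (for instance an empty \(G\) against a \(P\) with a marginal dependence is vacuously adjacency faithful yet has a nonempty \(\text{sk}(P)\)) — so I would make sure the proof records that Proposition \ref{simp} is the place where Markovianity is consumed. Everything else is definition chasing, so I expect no further obstacle.
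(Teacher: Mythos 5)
Your proof is correct: the paper states this proposition as a known result cited to the reference \citep{rams} and gives no proof of its own, and your argument is the standard one --- adjacency faithfulness accounts exactly for the edges of \(G\) appearing in \(\text{sk}(P)\), while the pairwise Markov property (Proposition \ref{simp}) supplies the separating set \(\text{an}_G(i,j)\) for every non-adjacent pair. You also correctly identify that the standing Markov assumption is consumed only in the non-adjacent case of the converse, which is the one piece of content beyond definition chasing.
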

\begin{proof}[Proof of Proposition \ref{weak}]
    Let \(i,k,j\) be a collider in the DAG \(G\). Since \(P\) is Markovian to \(G\), we have that 
    \(i\ci j\cd \mathrm{an}_G(i,j)\) by the pairwise Markov property in Proposition \ref{simp}, and by acyclicity of \(G\), 
    we have 
    \(k\not \in \mathrm{an}_G(i,j)\).
\end{proof}

\begin{proof}[Proof of Proposition \ref{mnsrep}]\hspace{2pt}\\
    \textbf{If}: Let \(P\) be V-OUS and collider-stable wrt \(G\). Since \(\text{sk}(P)=\text{sk}(G\)), it suffices to show Item \ref{i2} that, for v-configurations \(i\sim k\sim j\) in \(\text{sk}(P)=\text{sk}(G)\), we have:
\begin{enumerate}
    \item If \(i\sim k\sim j\) is assigned to be a collider, then if \(i\sim k\sim j\) is a non-collider in \(G\), V-OUS is violated.
    \item Likewise if \(i\sim k\sim j\) is assigned to be a non-collider \(k\in C\), then if \(i\sim k\sim j\) is a collider in \(G\), collider-stability is violated.
\end{enumerate}
Note that this is due to the orientation rules in Definition \ref{rule} being negations of the V-OUS and collider-stability property. Thus \(G\) satisfies the orientation rule wrt \(P\).
        
\textbf{Only if}: Let G satisfy the V-OUS and collider-stable orientation rules wrt \(P\). Since  then for v-configurations \(i\sim k\sim j\) in \(G\): From Item \ref{i2}, we have the following breakdown:
    \begin{enumerate}
    \item
Let \(i\sim k\sim j\) is a collider in \(G\).  It is easy to verify that in both cases, where it is assigned as a collider or remains unassigned, that \(i\sim k\sim j\) is collider-stable wrt $G$.

        \item 
        Let \(i\sim k\sim j\) be a non-collider in \(G\).
        \begin{enumerate}
            \item 
            If \(i\sim k\sim j\) is assigned as a non-collider, then   \(i\sim k\sim j\) is  V-OUS wrt \(G\).
            \item Let  \(i\sim k\sim j\) be unassigned, and $C \subseteq V\setminus\{i,j,k\}$.    If \(i\ci j\cd C\), then we must also have \(i\ci j\cd C\cup \{k\}\), otherwise \(i\sim k\sim j\) would have been assigned to be a collider. Hence \(i\sim k\sim j\) is  V-OUS wrt \(G\).
        \end{enumerate}
    \end{enumerate}
    Thus \(P\) is V-OUS and collider-stable wrt \(G\), and \(\text{sk}(P)=\text{sk}(G)\) follows from Item \ref{i1}.
\end{proof}
To show Theorem \ref{prop2}, we first show Remark \ref{modvequi}:

\begin{proof}[Proof of Remark \ref{modvequi}]
    By Proposition \ref{mnsrep} we see that \(P\) being modified V-stable is equivalent to: DAGs \(G\) to which \(P\) satisfies \(\text{sk}(P)=\text{sk}(G)\), and V-OUS and collider-stable wrt have unique v-configurations. This is equivalent to all such DAGs are Markov equivalent by Proposition \ref{Markovchar}.
\end{proof}
\begin{prop}\label{propapp}
           Let \(G_1\) and \(G_2\) be Markov equivalent DAGs, with the same distribution $P$.   If \(P\) is V-OUS and collider-stable wrt \(G_1\), then \(P\) is V-OUS and collider-stable wrt \(G_2\).
      \end{prop}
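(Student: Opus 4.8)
The plan is to observe that both properties in Definition \ref{v-ousdef} refer to the DAG $G$ only through two pieces of combinatorial data: the collection of v-configurations $i \sim k \sim j$ appearing in $G$, and, for each such v-configuration, whether it is a collider or a non-collider. If I can show that $G_1$ and $G_2$ agree on both of these, then the statement ``$P$ is V-OUS and collider-stable wrt $G_1$'' is \emph{literally the same assertion} as ``$P$ is V-OUS and collider-stable wrt $G_2$,'' and there is nothing further to prove.

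First I would apply Proposition \ref{Markovchar}: since $G_1$ and $G_2$ are Markov equivalent, $\text{sk}(G_1) = \text{sk}(G_2)$ and the set of colliders in $G_1$ coincides with the set of colliders in $G_2$. Next, I would note that whether a triple $i,k,j$ forms a v-configuration $i \sim k \sim j$ is a property of the skeleton alone ($i$ and $j$ each adjacent to $k$, but $i$ and $j$ not adjacent); hence $G_1$ and $G_2$ have exactly the same v-configurations. Combining this with the agreement of collider sets, each v-configuration is a collider in $G_1$ if and only if it is a collider in $G_2$, and therefore a non-collider in $G_1$ if and only if a non-collider in $G_2$.

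Finally, I would feed this into Definition \ref{v-ousdef}. The V-OUS clause quantifies over non-collider v-configurations $i \sim k \sim j$ of the graph and then imposes a condition on $P$ alone ($i \ci j \cd C \Rightarrow i \ci j \cd C \cup \{k\}$ for all $C \subseteq V \setminus \{i,j,k\}$); the collider-stability clause quantifies over collider v-configurations and again imposes a condition on $P$ alone (existence of $C'$ with $i \ci j \cd C'$). Since $G_1$ and $G_2$ present the same non-collider v-configurations and the same collider v-configurations, the universally quantified families of conditions on $P$ are identical for the two graphs, so $P$ being V-OUS and collider-stable wrt $G_1$ immediately gives the same wrt $G_2$.

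I do not expect a real obstacle here; the only point requiring care is the bookkeeping observation that v-configurations are determined by the skeleton, so that Pearl's characterisation (Proposition \ref{Markovchar}) really does transfer the full collider/non-collider partition, not just the collider set in isolation.
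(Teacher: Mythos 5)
Your proposal is correct and follows essentially the same route as the paper's proof: invoke Proposition \ref{Markovchar} to conclude that the skeletons, v-configurations, and colliders of $G_1$ and $G_2$ coincide, so the conditions defining V-OUS and collider-stability are literally the same for both graphs. Your write-up is just a more explicit version of the paper's one-line argument, with the helpful added remark that v-configurations are determined by the skeleton alone.
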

    \begin{proof}
        By Proposition \ref{Markovchar},  \(\text{sk}(G_1)=\text{sk}(G_2)\) and the v-configurations and colliders in \(G_1\) and \(G_2\) coincide. Since \(P\) is the same for both \(G_1\) and \(G_2\), \(P\) is V-OUS and collider-stable wrt \(G_2\) by virtue that it is for $G_1$.
    \end{proof}
\begin{proof}[Proof of Theorem \ref{prop2}]
    Denote the output of the algorithm by \(G(P)\).\\
    \textbf{If}: 
        Since \(P\) is adjacency faithful wrt \(G_0\), by Proposition \ref{adjf} and definition of LoNS, \(\text{sk}(P)=\text{sk}(G)=\text{sk}(G(P))\). \(P\) is also V-OUS and collider-stable wrt both \(G_0\) and \(G(P)\), thus \(G_0\) is Markov equivalent to \(G(P)\) by Remark \ref{modvequi}.\\
    \textbf{Only if}: Let \(G(P)\) and \(G_0\) be Markov equivalent.
    \begin{enumerate}
        \item (Adjacency faithfulness). By Proposition \ref{Markovchar}, and since \(G(P)\) is Markov equivalent to \(G_0\), we have that \(\text{sk}(P)=\text{sk}(G(P))=\text{sk}(G_0)\), and by Proposition \ref{adjf}, adjacency faithfulness follows.
        \item (V-OUS and collider-stable). Since \(P\) is V-OUS and collider-stable wrt \(G(P)\), by Proposition \ref{propapp},  \(P\) is V-OUS and collider-stable wrt \(G_0\).  
        \item (Modified V-stability). 
        Appealing to Remark \ref{modvequi}, we note that any DAG  satisfying the conditions in the remark is the output of a LoNS algorithm; thus by assumption, these DAGS are Markov equivalent to \(G_0\).  \qedhere
    \end{enumerate}
\end{proof}
To show Proposition \ref{exc}, we use the following from \citet{exc}:
\begin{prop}[\citep{exc}]\label{kayx}
        If \(P\) is exchangable, then \(P\) satisfying composition is equivalent to \(P\) satisfying upward stability; that is,  for all \(i,j,k\in V\), we have  \(i\ci j\cd C \Rightarrow i\ci j\cd C\cup \{k\}\).
    \end{prop}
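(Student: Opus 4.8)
The plan is to prove the two implications separately, using exchangeability only where it is genuinely needed. Throughout I take \(i,j,k\) to be distinct with \(i,j,k\notin C\); the case \(k\in C\) is vacuous, since then \(C\cup\{k\}=C\). Besides the stated hypotheses, the only ingredients are the standard semi-graphoid axioms (weak union and contraction), which hold for every probability distribution.

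First I would prove that composition together with exchangeability yields upward stability. Suppose \(i\ci j\cd C\). The key step is to invoke exchangeability: applying the transposition of the indices \(j\) and \(k\) (fixing every other index), and noting that this permutation leaves both \(i\) and the set \(C\) invariant---this is where \(k\notin C\) and \(k\neq i,j\) are used---I transport \(i\ci j\cd C\) to \(i\ci k\cd C\). Composition applied to \(i\ci j\cd C\) and \(i\ci k\cd C\) then gives \(i\ci \{j,k\}\cd C\), and weak union immediately yields \(i\ci j\cd C\cup\{k\}\), which is exactly upward stability.

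For the converse I would show that upward stability alone implies composition, so exchangeability is not needed in this direction. Assume \(i\ci j\cd C\) and \(i\ci k\cd C\). Upward stability applied to the first statement, adjoining the node \(k\), gives \(i\ci j\cd C\cup\{k\}\). Combining this with \(i\ci k\cd C\) via the contraction axiom produces \(i\ci \{j,k\}\cd C\), which is the composition conclusion.

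The argument is short once the right pair of semi-graphoid axioms is identified, so I do not anticipate a serious obstacle. The one point requiring care is the exchangeability step in the forward direction: I must verify that the transposition of \(j\) and \(k\) really preserves the conditional-independence relation. This rests on the fact that exchangeability makes the joint law invariant under index permutations, and hence maps any statement \(A\ci B\cd D\) to \(\pi(A)\ci\pi(B)\cd\pi(D)\); combined with the observation that the swap sends \(\{i\}\) and \(C\) to themselves and \(\{j\}\) to \(\{k\}\), it converts \(i\ci j\cd C\) into \(i\ci k\cd C\). I would state the distinctness and non-membership side conditions explicitly to guarantee that the transposition acts trivially on \(i\) and \(C\).
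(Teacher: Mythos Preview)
The paper does not supply its own proof of this proposition; it is quoted verbatim as a result from an external reference and then invoked in the proof of Proposition~\ref{exc}. Your argument is correct and is the standard one: for the forward direction you use exchangeability to transport \(i\ci j\cd C\) to \(i\ci k\cd C\) via the transposition of \(j\) and \(k\), then apply composition and weak union; for the converse you combine upward stability with contraction. Your observation that exchangeability is needed only in the composition-implies-upward-stability direction is also correct and is a mild sharpening of the statement as quoted.
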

    
\begin{proof}[Proof of Proposition \ref{exc}]
    For non-collider \(i\sim k \sim j\) in \(G\), exchangability of the marginal \(\{i,j,k\}\) given \(C\subseteq V\backslash\{i,j,k\}\) follows from conditional exchangability.
    Combined with Proposition \ref{kayx} and composition, the marginal of \(\{i,j,k\}\) conditional on any \(C\subseteq V\backslash{\{i,j,k\}}\) is upward-stable, thus implying V-OUS.  
\end{proof}

\begin{proof}[Proof of Proposition \ref{propconst}]\hspace{2pt}\\
    \textbf{If:} Since there exists a DAG \(G\) to which \(P\) satisfies V-OUS and collider-stability and \(\text{sk}(P)=\text{sk}(G)\), Proposition \ref{mnsrep} guarantees that a DAG that satisfies the V-OUS and collider-stable orientation rule wrt \(P\) exists, and will be returned by Step-3 of Me-LoNS, again by Proposition \ref{mnsrep}, \(P\) will be adajcency faithful and V-OUS and collider-stable wrt this output.
    
    \textbf{Only if:} If there does not exist a DAG \(G\) to which \(P\) satisfies V-OUS and collider-stability and \(\text{sk}(P)=\text{sk}(G)\), by Proposition \ref{mnsrep}, there is no DAG that satisfies the V-OUS and collider-stable orientation rule wrt \(P\), thus Step 3 of Me-LoNS errors.
\end{proof}

\begin{proof}[Proof of Proposition \ref{pcsucks}]
\hspace{8pt}
\begin{enumerate}
    \item (Me-LoNS generalises PC under V-stability). Under V-stability of \(P\), the V-OUS and collider-stable orientation rule wrt \(P\) for assigning colliders becomes the following: for \(i\sim k\sim j\) in \(\text{sk}(P)\), we have
    \begin{equation*}
         \exists C\subseteq V\backslash \{i,j,k\} \quad i\ci j\cd C\And i\notci j\cd C\cup \{k\} \iff \exists C\subseteq V\backslash \{i,j,k\} \quad i\ci j\cd C.
    \end{equation*}
    Note that the RHS is the negation of the V-OUS and collider-stable orientation rule wrt \(P\) when assigning a non-collider, thus the orientation rules reduce to the following:
    \begin{enumerate}
        \item If  \(k\in C\) for all \(C\) such that \(i\ci j\cd C\), then assign \(i\sim k\sim j\) to be a non-collider (unchanged).
        \item Otherwise, assign \(i\sim k\sim j\) to be a collider.
    \end{enumerate}
    This is the same as in PC.
    \item (Me-LoNS works but not PC).  Consider SEM \ref{sem2}, which gives the graph \(G_0\) in Figure \ref{ex2} with the set of conditional independencies being \(X_1\ci X_2\) and \(X_1\ci X_2\cd \{X_3,X_4\}\) and \(X_3\ci X_4\cd \{X_1,X_2\}\). Thus we see that the V-OUS-collider-stable orientation rule orients all the v-configurations correctly, but PC orients the  collider \(X_1 \xrightarrow[]{} X_3 \xleftarrow{} X_2\) as a non-collider.\qedhere
\end{enumerate}
\end{proof}

\putbib[uai2024-template.bib]

\end{bibunit}

\begin{thebibliography}{14}
\providecommand{\natexlab}[1]{#1}
\providecommand{\url}[1]{\texttt{#1}}
\expandafter\ifx\csname urlstyle\endcsname\relax
  \providecommand{\doi}[1]{doi: #1}\else
  \providecommand{\doi}{doi: \begingroup \urlstyle{rm}\Url}\fi

\bibitem[Chickering(2002)]{score}
David~Maxwell Chickering.
\newblock Optimal structure identification with greedy search.
\newblock \emph{J. Mach. Learn. Res.}, 3:\penalty0 507--554, 2002.

\bibitem[Gretton et~al.(2007)Gretton, Fukumizu, Teo, Song, Sch\"{o}lkopf, and Smola]{HSIC}
Arthur Gretton, Kenji Fukumizu, Choon Teo, Le~Song, Bernhard Sch\"{o}lkopf, and Alex Smola.
\newblock A kernel statistical test of independence.
\newblock In \emph{Advances in Neural Information Processing Systems}, volume~20, 2007.

\bibitem[Lam et~al.(2022)Lam, Andrews, and Ramsey]{lam}
Wai-Yin Lam, Bryan Andrews, and Joseph Ramsey.
\newblock Greedy relaxations of the sparsest permutation algorithm.
\newblock In \emph{Proceedings of the Thirty-Eighth Conference on Uncertainty in Artificial Intelligence}, volume 180 of \emph{Proceedings of Machine Learning Research}, pages 1052--1062. PMLR, 01--05 Aug 2022.

\bibitem[Manzour et~al.(2021)Manzour, K\"{u}\c{c}\"{u}kyavuz, Wu, and Shojaie]{shoj}
Hasan Manzour, Simge K\"{u}\c{c}\"{u}kyavuz, Hao-Hsiang Wu, and Ali Shojaie.
\newblock Integer programming for learning directed acyclic graphs from continuous data.
\newblock \emph{INFORMS Journal on Optimization}, 3\penalty0 (1):\penalty0 46--73, 2021.

\bibitem[Ramsey et~al.(2006)Ramsey, Zhang, and Spirtes]{rams}
Joseph~D. Ramsey, Jiji Zhang, and Peter Spirtes.
\newblock Adjacency-faithfulness and conservative causal inference.
\newblock In \emph{{UAI} '06, Proceedings of the 22nd Conference in Uncertainty in Artificial Intelligence, Cambridge, MA, USA, July 13-16, 2006}. {AUAI} Press, 2006.

\bibitem[Raskutti and Uhler(2018)]{UhlSP}
Garvesh Raskutti and Caroline Uhler.
\newblock Learning directed acyclic graph models based on sparsest permutations.
\newblock \emph{Stat}, 7\penalty0 (1):\penalty0 e183, 2018.
\newblock e183 sta4.183.

\bibitem[Sadeghi(2017)]{orderfaith}
Kayvan Sadeghi.
\newblock Faithfulness of probability distributions and graphs.
\newblock \emph{J. Mach. Learn. Res.}, 18(148):\penalty0 1--29, 2017.

\bibitem[Sadeghi and Soo(2022)]{Sad}
Kayvan Sadeghi and Terry Soo.
\newblock Conditions and assumptions for constraint-based causal structure learning.
\newblock \emph{J. Mach. Learn. Res.}, 23\penalty0 (109):\penalty0 1--34, 2022.

\bibitem[Shah and Peters(2020)]{hard}
Rajen~D. Shah and Jonas Peters.
\newblock {The hardness of conditional independence testing and the generalised covariance measure}.
\newblock \emph{Ann. Statist.}, 48\penalty0 (3):\penalty0 1514 -- 1538, 2020.

\bibitem[Solus et~al.(2021)Solus, Wang, and Uhler]{UhlGSP}
L~Solus, Y~Wang, and C~Uhler.
\newblock {Consistency guarantees for greedy permutation-based causal inference algorithms}.
\newblock \emph{Biometrika}, 108\penalty0 (4):\penalty0 795--814, 01 2021.

\bibitem[Spirtes et~al.(2000)Spirtes, Glymour, and Scheines]{const}
Peter Spirtes, Clark Glymour, and Richard Scheines.
\newblock MIT Press, 2000.

\bibitem[Uhler et~al.(2013)Uhler, Raskutti, Bühlmann, and Yu]{faithbad}
Caroline Uhler, Garvesh Raskutti, Peter Bühlmann, and Bin Yu.
\newblock Geometry of faithfulness assumption in causal inference.
\newblock \emph{Ann. Statist.}, 41:\penalty0 436--463, 2013.

\bibitem[Zhang and Spirtes(2008)]{Faith}
Jiji Zhang and Peter Spirtes.
\newblock Detection of unfaithfulness and robust causal inference.
\newblock \emph{Minds and Machines}, pages 239--271, 2008.

\bibitem[Zheng et~al.(2024)Zheng, Huang, Chen, Ramsey, Gong, Cai, Shimizu, Spirtes, and Zhang]{python}
Yujia Zheng, Biwei Huang, Wei Chen, Joseph Ramsey, Mingming Gong, Ruichu Cai, Shohei Shimizu, Peter Spirtes, and Kun Zhang.
\newblock Causal-learn: Causal discovery in {P}ython.
\newblock \emph{J. Mach. Learn. Res.}, 25\penalty0 (60):\penalty0 1--8, 2024.

\end{thebibliography}


\begin{thebibliography}{4}
\providecommand{\natexlab}[1]{#1}
\providecommand{\url}[1]{\texttt{#1}}
\expandafter\ifx\csname urlstyle\endcsname\relax
  \providecommand{\doi}[1]{doi: #1}\else
  \providecommand{\doi}{doi: \begingroup \urlstyle{rm}\Url}\fi

\bibitem[Lauritzen(1996)]{lauritzen1996}
Steffen~L. Lauritzen.
\newblock \emph{Graphical Models}.
\newblock Oxford University Press, 1996.

\bibitem[Ramsey et~al.(2006)Ramsey, Zhang, and Spirtes]{rams}
Joseph~D. Ramsey, Jiji Zhang, and Peter Spirtes.
\newblock Adjacency-faithfulness and conservative causal inference.
\newblock In \emph{{UAI} '06, Proceedings of the 22nd Conference in Uncertainty in Artificial Intelligence, Cambridge, MA, USA, July 13-16, 2006}. {AUAI} Press, 2006.

\bibitem[Sadeghi(2020)]{exc}
Kayvan Sadeghi.
\newblock {On finite exchangeability and conditional independence}.
\newblock \emph{Electron. J. Stat.}, 14\penalty0 (2):\penalty0 2773 -- 2797, 2020.

\bibitem[Verma and Pearl(1990)]{pearl}
Tom~S. Verma and Judea Pearl.
\newblock On the equivalence of causal models.
\newblock In \emph{Proceedings of the Sixth Conference on Uncertainty in Artificial Intelligence}, pages 220--227. Elsevier Science, 1990.

\end{thebibliography}
\end{document}